\long\def\COMMENT#1{}
\newcommand{\defAs}{\mathrel{\makebox{\:= \hspace{-.2cm} \raisebox{-0.5 ex}[0 ex][0 ex]{\tiny Def}}}}
\newcommand{\equivAs}{\mathrel{\makebox{\: $\equiv$ \hspace{-.2cm} \raisebox{-0.5 ex}[0 ex][0 ex]{\tiny Def}}}}
\newcommand{\comment}[1]{}
\def\eod {{\unskip\nobreak\hfil\penalty50
\hskip2em\hbox{}\nobreak\hfil $\Box$
\parfillskip=0pt \finalhyphendemerits=0 \par \medskip}}
\def\qed {{\unskip\nobreak\hfil\penalty50
\hskip2em\hbox{}\nobreak\hfil \rule{2mm}{2mm}
\parfillskip=0pt \finalhyphendemerits=0 \par \medskip}}
\newcommand{\model}{\ensuremath{\mbox{\boldmath $\mathcal{M}$}}\xspace}
\newcommand{\pow}{\mbox{\rm pow}}
\newcommand{\Rel}{\mbox{\rm Rel}}
\newcommand{\D}{\mathsf{D}}
\newcommand{\Sc}{\mathsf{S5}}
\newcommand{\TLQSR}{\ensuremath{\mbox{$\mathrm{3LQS}^{R}$}}\xspace}
\newcommand{\TLSZ}{\ensuremath{\mbox{$\mathrm{3LST}_{0}$}}\xspace}
\newcommand{\MLS}{\ensuremath{\mbox{$\mathrm{MLS}$}}\xspace}
\newcommand{\TLS}{\ensuremath{\mbox{$\mathrm{2LS}$}}\xspace}
\newcommand{\TLSSP}{\ensuremath{\mbox{$\mathrm{3LSSP}$}}\xspace}
\newcommand{\TLSSPU}{\ensuremath{\mbox{$\mathrm{3LSSPU}$}}\xspace}
\newcommand{\TLQST}{\ensuremath{\mbox{$\mathrm{3LQST_{0}}$}}\xspace}
\newcommand{\TLQSTR}{\ensuremath{\mbox{$\mathrm{3LQST_{0}}^{R}$}}\xspace}
\newcommand{\modelzs}{\model^{\vec{z},*}}
\newcommand{\modelsz}{\model^{*,\vec{z}}}
\newcommand{\modelZs}{\model^{\vec{Z},*}}
\newcommand{\modelsZ}{\model^{*,\vec{Z}}}
\newcommand{\modelz}{\model^{\vec{z}}}
\newcommand{\modelZ}{\model^{\vec{Z}}}
\newcommand{\modelStar}{\model^{*}}
\newcommand{\modelsZm}{\model^{*,\vec{Z}^{-}}}
\newcommand{\modelZms}{\model^{\vec{Z}^{-},*}}
\newcommand{\modelZp}{\model^{\vec{Z}'}}
\newcommand{\modelZm}{\model^{\vec{Z}^{-}}}
\newcommand{\modelZmz}{\model^{\vec{Z}^{-},\vec{z}}}
\newcommand{\assignz}{M^{\vec{z}}}
\newcommand{\assignStar}{M^{*}}
\newcommand{\assignZm}{M^{\vec{Z}^{-}}}
\newcommand{\assignZmz}{M^{\vec{Z}^{-},\vec{z}}}
\newcommand{\assignZms}{M^{\vec{Z}^{-},*}}
\newcommand{\dlss}{\mathcal{DL}\langle 4LQS^R\rangle(\D)}
\newcommand{\flqsr}{\ensuremath{4LQS^R}}
\renewcommand{\geq}{\geqslant}
\renewcommand{\leq}{\leqslant}
\newcommand{\xqedhere}[2]{%
  \rlap{\hbox to#1{\hfil\llap{\ensuremath{#2}}}}}
\begin{document}

\begin{frontmatter}
  \title{The decision problem for a three-sorted fragment of set theory with restricted quantification and finite enumerations} \author{Domenico Cantone\thanksref{ALL}\thanksref{cantone@dmi.unict.it}}
  \address{Dipartimento di Matematica e Informatica\\ Universit\`{a} di Catania\\
    Catania, Italy} \author{Marianna Nicolosi-Asmundo\thanksref{nicolosi@dmi.unict.it}}
  \address{Dipartimento di Matematica e Informatica\\ Universit\`{a} di Catania\\
    Catania, Italy} \thanks[ALL]{Thanks
    to everyone who should be thanked} \thanks[myemail]{Email:
    \href{cantone@dmi.unict.it} {\texttt{\normalshape
        cantone@dmi.unict.it}}} \thanks[coemail]{Email:
    \href{nicolosi@dmi.unict.it} {\texttt{\normalshape
        nicolosi@dmi.unict.it}}}
\begin{abstract} 
 We solve the satisfiability problem for a three-sorted fragment of set theory (denoted $\TLQSTR$), which admits a restricted form of quantification over individual and set variables and the finite enumeration operator $\{\text{-}, \text{-}, \ldots, \text{-}\}$ over individual variables, by showing that it enjoys a small model property, i.e., any satisfiable formula $\psi$ of $\TLQSTR$ has a finite model whose size depends solely on the length of $\psi$ itself.
Several set-theoretic constructs are expressible by $\TLQSTR$-formulae, such as  some variants of the power set operator and the unordered Cartesian product. In particular, concerning the unordered Cartesian product, we show that when finite enumerations are used to represent the construct, the resulting formula is exponentially shorter than the one that can be constructed without resorting to such terms. 
\end{abstract}
\begin{keyword}
  Please list keywords from your paper here, separated by commas.
\end{keyword}
\end{frontmatter}

\section{Introduction}
\emph{Computable set theory} is a research field 
studying the decidability of the satisfiability problem for collections of set-theoretic formulae (also
called \emph{syllogistics}).  

The main results in computable set theory up to 2001 have been
collected in \cite{CFO89,COO01}.  We also mention that the most
efficient decision procedures have been implemented in the proof
verifier \textsf{{\AE}tnaNova} \cite{SOC07} and form its inferential core.


Most of the decidability results established in computable set theory regard
one-sorted multi-level syllogistics, namely collections of formulae
involving variables of one type only, ranging over the von Neumann
universe of sets.  On the other hand, few decidability results have
been proved for multi-sorted stratified syllogistics, admitting variables
of several types.  This, despite of the fact that in many
fields of computer science and mathematics often one deals with
multi-sorted languages.

An efficient decision procedure for the satisfiability of the
Two-Level Syllogistic language (\TLS), a fragment admitting
variables of two sorts for individuals and sets of individuals, basic set-theoretic
operators such as $\cup$, $\cap$, $\setminus$, the relators $=$,
$\in$, $\subseteq$, and propositional connectives, has
been presented in \cite{FerOm1978}.  Subsequently, in \cite{CanCut90},
the extension of \TLS with the singleton operator and the Cartesian
product operator has been proved decidable.  
Tarski's and
Presburger's arithmetics extended with sets have been studied in
\cite{CCS90}.  The three-sorted language \TLSSPU (Three-Level
Syllogistic with Singleton, Powerset, and general Union), allowing three types of variables,
and the singleton, powerset, and general union operators, in
addition to the operators and predicates already contained in \TLS,
 has been
proved decidable in \cite{CanCut93}.  
More recently, in \cite{CanNic08}, the three-level quantified syllogistic \TLQSR, involving variables of three sorts  has been shown to have a
decidable satisfiability problem. Later, in \cite{CanNic13}, the satisfiability problem 
for \flqsr, a four-level quantified syllogistic admitting variables of four sorts has been proved to be decidable. The latter result has been exploited in \cite{CanLonNicSan15} to prove that $\dlss$, an expressive description logic, has the consistency problem for its knowledge bases decidable.  

In this paper we present a decidability result for the satisfiability
problem of the set-theoretic language $\TLQSTR$ (Three-Level Quantified
Syllogistic with Finite Enumerations and Restricted quantifiers), which is a three-sorted
quantified syllogistic
involving \emph{individual variables}, \emph{set variables}, and
\emph{collection variables}, ranging respectively over the elements of
a given nonempty universe $D$, over the subsets of $D$, and 
over the collections of subsets of $D$. The language of $\TLQSTR$ admits  the predicate symbols $=$ and $\in$
and a restricted form of quantification
over individual and set variables. $\TLQSTR$ extends the fragment \TLQSR presented in \cite{CanNic08} since it admits
the finite enumeration operator $\{\text{-}, \text{-}, \ldots, \text{-}\}$ over individual variables.  In spite of its simplicity, $\TLQSTR$ allows
one to express several constructs of set theory.  Among them, the most
comprehensive one is the set former, which in turn allows one to
express other set-theoretic operators like several variants of the powerset and the unordered Cartesian product. We will present two different  $\TLQSTR$ representations of the latter construct: the first, more straightforward one involves finite enumerations and has linear length in the size of the unordered Cartesian product, the second one does not involve finite enumerations, is exponentially longer than the 
 first representation, and is  expressible also in  \TLQSR. 

We will prove that $\TLQSTR$ enjoys a small model property by showing
how to extract, out of a given model satisfying a $\TLQSTR$-formula
$\psi$, another model of $\psi$ but of bounded finite cardinality.


The paper is organized as follows.  In Section \ref{genericlanguage},
we describe the syntax and semantics of a more general 
language, denoted $\TLQST$, which contains $\TLQSTR$ as a
proper fragment.
%
Subsequently, in Section \ref{machinery} the machinery needed to prove
our main decidability result is provided.  
In Section \ref{satisfiability},
the small model property for $\TLQSTR$ is established, thus solving the
satisfiability problem for $\TLQSTR$.  Then, in Section
\ref{expressiveness}, 
we show how $\TLQSTR$ can be used to express several set theoretical operators.  
Finally, in Section
\ref{conclusions}, we draw our conclusions.


%
%
\section{The language $\TLQST$ and its subfragment $\TLQSTR$}\label{genericlanguage}

We begin by defining the syntax and the semantics of the more general
three-level quantified language $\TLQST$.  Then, in Section
\ref{restrictionquant}, we show how to characterize $\TLQSTR$-formulae
by suitable restrictions on the usage of quantifiers in formulae of
$\TLQST$.

The three-level quantified language
$\TLQST$ involves
\begin{itemize}
\item[(i)] a collection $\mathcal{V}_{0}$ of \emph{individual} or 
\emph{sort $0$ variables},
denoted by $x,y,z,\ldots$;

\item[(ii)] a collection $\mathcal{V}_{1}$ of \emph{set} or 
\emph{sort $1$ variables},
denoted by $X,Y,Z,\ldots$;

\item[(iii)] a collection $\mathcal{V}_{2}$ of \emph{collection} or
\emph{sort $2$ variables}, denoted by 
$A,B,C,\ldots$.

\end{itemize}
%
%
In addition to variables $\TLQST$ involves also \emph{finite enumerations} of type $\{x_1,\ldots,x_k\}$, with $x_1,\ldots,x_k \in \mathcal{V}_{0}$, $k > 0$. $\TLQST$-\emph{quantifier-free atomic formulae} are classified as:
\begin{itemize}
\item \emph{level $0$}: $x = y$, $x \in X$, $\{x_1,\ldots,x_k \} = X$, $\{x_1,\ldots,x_k \} \in A$, where
$x,y,x_1,\ldots,x_k \in \mathcal{V}_{0}$, $k>0$, $X \in \mathcal{V}_{1}$, and $A \in \mathcal{V}_{2}$;

\item \emph{level $1$}:  $X = Y$, $X \in A$, where $X,Y \in \mathcal{V}_{1}$ and $A \in \mathcal{V}_{2}$.
\end{itemize}
$\TLQST$ \emph{purely universal formulae} are classified as:

\begin{itemize}
\item \emph{level $0$}: $(\forall z_1) \ldots (\forall z_n) \varphi_0$, 
with $\varphi_0$ a propositional combination of level $0$
quantifier-free atoms and $z_1,\ldots,z_n$ variables of sort $0$, with
$n \geq 1$;\footnote{The logical connectives admitted in propositional
combinations are the usual ones: negation $\neg$, conjunction
$\wedge$, disjunction $\vee$, implication $\rightarrow$, and 
biimplication $\leftrightarrow$.}

\item \emph{level $1$}: $(\forall Z_1) \ldots (\forall Z_m) \varphi_1$, 
where $\varphi_1$ is a propositional combination of quantifier-free
atomic formulae of any level and of purely universal formulae of level
$0$, and $Z_1,\ldots,Z_m$ are variables of sort $1$, with $m \geq 1$.
\end{itemize}
Finally, the \emph{formulae of $\TLQST$} are all the propositional
combinations of quantifier-free atomic formulae and of purely
universal formulae of levels $0$ and $1$.

%

A \emph{$\TLQST$-interpretation}
is a pair $\model=(D,M)$, where $D$ is any nonempty collection of objects, called the
    \emph{domain} or \emph{universe} of $\model$, and $M$ is an assignment over the variables of $\TLQST$ such that
    \begin{itemize}
	\item $Mx \in D$, for each individual variable $x \in
	\mathcal{V}_{0}$;
    \item $MX \subseteq D$, for each set variable $X \in
	\mathcal{V}_{1}$;
     \item  $MA \subseteq \pow(D)$, for all collection variables
        $A \in \mathcal{V}_{2}$.\\
        (we recall that $\pow(s)$ denotes the powerset of $s$)
    \end{itemize}

Next, let

\smallskip
{- $\model=(D,M)$ be a $\TLQST$-interpretation,}

{- $x_1,....x_n \in \mathcal{V}_0$,~~ $X^1 _1, ... X^1 _m \in \mathcal{V}_1$,}

{- $u_1, ... u_n \in D$,~~ $U^1 _1, ... U^1 _m \in \pow(D)$.}

\smallskip
\noindent
By $\model[z_{1}/u_{1},\ldots,z_{n}/u_{n},
         Z_{1}/U_{1},\ldots,Z_{m}/U_{m}]$
we denote the $\TLQST$-interpretation $\model' = (D,M')$ such that $M'x_i =u_i$, for $i=1,...,n$, $M'X^1_j = U^1 _j$, for $j=1,...,m$, and which otherwise coincides with $M$ on all remaining variables.
%

Throughout the paper we will use the abbreviations: $\modelz \defAs \model[z_{1}/u_{1},\ldots,z_{n}/u_{n}]$, $\modelZ \defAs \model[Z_{1}/U_{1},\ldots,Z_{m}/U_{m}]\,,$ 
where the variables $z_{i}$ and $Z_{j}$, the individuals $u_{i}$, and the
subsets $U_{j}$ are understood from the context.

Let $\psi$ be a $\TLQST$-formula and let $\model = (D, M)$ be a
$\TLQST$-interpre\-ta\-tion. The notion of \emph{satisfiability}  for $\psi$ with respect to $\model$ (denoted
by $\model \models \psi$) is defined recursively over the structure of $\varphi$. The evaluation of quantifier-free atomic formulae is carried out as usual according to the standard meaning of the predicates `$\in$'
and `$=$'. Purely universal formulae are interpreted as follows:
%
%
%
\begin{itemize}
\item $\model \models (\forall z_1) \ldots (\forall z_n) \varphi_0$
~~iff~~ $\model[z_1/u_1,\ldots , z_n/u_n] \models \varphi_0$, \\
for all $u_1,\ldots ,u_n \in D$;
\item  $\model \models (\forall Z_1) \ldots (\forall Z_m)
\varphi_1$ ~~iff~~ $\model[Z_1/U_1,\ldots , Z_m/U_m] \models
\varphi_1$, \\
for all $U_1,\ldots , U_n \subseteq D$.
\end{itemize}
Finally, compound formulae are evaluated according to the standard rules of propositional logic.
%
%
Let $\psi$ be a $\TLQST$-formula.  If $\model \models \psi$ (i.e.,
$\model$ \emph{satisfies} $\psi$), then $\model$ is said to be a
\emph{$\TLQST$-model for $\psi$}.  A $\TLQST$-formula is said to be
\emph{satisfiable} if it has a $\TLQST$-model.  A $\TLQST$-formula is
\emph{valid} if it is satisfied by all $\TLQST$-interpretations.

\subsection{Characterizing the restricted fragment $\TLQSTR$}\label{restrictionquant}

$\TLQSTR$ is the collection of the $\TLQST$-formulae $\psi$ such that,
for \emph{every} purely universal formula $(\forall Z_1)\ldots(\forall
Z_m)\varphi_1$ of level 1 occurring in $\psi$ and \emph{every} purely
universal formula $(\forall z_1) \ldots (\forall z_n) \varphi_0$ of
level 0 occurring in $\varphi_1$, the condition
\begin{equation}
    \label{condition}
{\small \neg \varphi_0 \rightarrow \bigwedge_{i=1}^n \bigwedge_{j=1}^m  z_i \in Z_j}
\end{equation}
is a valid $\TLQST$-formula (in this case we say that the purely
universal formula $(\forall z_1) \ldots (\forall z_n) \varphi_0$ is
\emph{linked to the variables $Z_1, \ldots , Z_m$}).

Condition~(\ref{condition}) guarantees that, if a given interpretation
assigns to $z_1,\ldots,z_n$ elements of the domain that make
$\varphi_0$ false, then all such values must be contained as elements
in the intersection of the sets assigned to $Z_1,\ldots,Z_m$.  This fact is used 
in the proof of Lemma \ref{quantifiedformTwo} 
to make sure that satisfiability is preserved in the finite model.  As
the examples in Section \ref{expressiveness} will illustrate,
condition~(\ref{condition}) is not particularly restrictive.

The following question arises: how one can establish whether a given
$\TLQST$-formula is a $\TLQSTR$-formula?  Observe that neither
quantification nor collection variables are involved in
condition~(\ref{condition}).  Indeed, it turns out that
(\ref{condition}) is a \TLS-formula and therefore one could use the
decision procedures in \cite{FerOm1978} to test its validity, as
$\TLQST$ is a conservative extension of \TLS. We mention also that in
most cases of interest, as will be shown in detail in Section
\ref{expressiveness}, condition~(\ref{condition}) is just an instance
of the simple propositional tautology $\neg (\mathbf{p} \rightarrow
\mathbf{q}) \rightarrow \mathbf{p}$, and therefore its validity can
follow just by inspection.

\section{Relativized interpretations}\label{machinery}
Small models of satisfiable $\TLQSTR$-formulae will be expressed in
terms of \emph{relativized interpretations} with respect to a suitable
domain.

%
%

%
\begin{definition}[Relativized interpretations]\label{relintrp}
Let $\model=(D,M)$ be a $\TLQST$-interpretation and let $D^{*}
\subseteq D$, $d^{*} \in D^{*}$, $\mathcal{V}'_{0} \subseteq \mathcal{V}_{0}$, $\mathcal{V}'_{1} \subseteq \mathcal{V}_{1}$, and $l >0$.  The \emph{relativized interpretation} $\Rel(\model, D^{*},
d^{*}, \mathcal{V}'_{0}, \mathcal{V}'_{1}, l)$ of $\model$ with respect to $D^{*}$, $d^{*}$, $\mathcal{V}'_{0}$,
$\mathcal{V}'_{1}$, and $l$ is the interpretation $\modelStar = (D^{*},M^{*})$
such that
\begin{eqnarray*}
    M^{*}x & = & \begin{cases}
        Mx\,, & \mbox{if $Mx \in D^{*}$}  \\
        d^{*}\,, & \mbox{otherwise}
    \end{cases}
    \\
    M^{*}X & = & MX \cap D^{*}
    \\
M^{*}A  &=&
    \big((MA \cap \pow(D^{*})) \setminus (\{M^{*}X: X \in \mathcal{V}'_{1}\} \cup \pow_{\leq l}(\{M^{*}x : x \in
    \mathcal{V}'_{0}\}))\big)
    \\
    && \;\;\cup \big(\{M^{*}X: X \in \mathcal{V}'_{1},~MX \in MA\} \cup (\pow_{\leq l}(\{M^{*}x : x \in \mathcal{V}'_{0}\}) \cap
    MA)\big)\,.
\end{eqnarray*}
For ease of notation, we will often omit the reference to the
element $d^{*} \in D^{*}$ and write simply\/ $\Rel(\model, D^{*},
\mathcal{V}'_{0}, \mathcal{V}'_{1}, l)$ in place of\/ $\Rel(\model, D^{*},
d^{*}, \mathcal{V}'_{0}, \mathcal{V}'_{1}, l)$.
\eod
\end{definition}

Our goal is to show that any given satisfiable $\TLQSTR$-formula $\psi$
is satisfied by a small model of the form $\Rel(\model, D^{*},
\mathcal{V}'_{0}, \mathcal{V}'_{1}, l)$, where $\model =(D,M)$ is a model of $\psi$ and 
$D^{*}$ is a suitable subset of $D$ of bounded finite size.

At first, we state a slightly stronger result for
$\TLQSTR$-formulae which are propositional combinations of 
quantifier-free atomic formulae of levels 0 and 1.


\begin{lemma}
\label{wasLe_basic} 
Let $\model=(D,M)$ and $\modelStar = \Rel(\model, D^{*}, d^*,
\mathcal{V}'_{0}, \mathcal{V}'_{1}, l)$ be, respectively, a $\TLQST$-interpretation and the relativized interpretation of $\model$ with respect to $D^{*} \subseteq
D$, $d^{*} \in D^{*}$, $\mathcal{V}_{0}' \subseteq
\mathcal{V}_{0}$,  $\mathcal{V}_{1}' \subseteq
\mathcal{V}_{1}$, and $l >0$.  Furthermore, let $\psi_{0}$ be a level $0$
quantifier-free atomic formula of the form $x = y$ or $x \in X$, with
$x,y \in \mathcal{V}_{0}$ and $X \in \mathcal{V}_{1}$, let $\psi_0'$ be a level 0 quantifier-free atomic formula of the form $\{x_1,\ldots,x_k\} = X$ or $\{x_1,\ldots,x_k\} \in A$, with $x_1,\ldots,x_k \in \mathcal{V}_{0}$, $X \in \mathcal{V}_{1}$, $A \in \mathcal{V}_{2}$, $k \leq l$, and let $\psi_1$ 
be a level $1$ quantifier-free atomic formula of the form
$X = Y$ or $X \in A$, with $X,Y \in \mathcal{V}_{1}'$, and $A \in
\mathcal{V}_{2}$.  Then we have:
\begin{itemize}
\item[(a)] if $Mx \in D^{*}$, for every $x\in \mathcal{V}_{0}$ in
$\psi_{0}$, then $\model \models \psi_{0}$ iff $\modelStar \models 
\psi_{0}$;

\item[(b)] if (b1) $Mx \in D^{*}$, for every $x\in \mathcal{V}_{0}$ in $\psi_{0}$, (b2) $M^* X = MX$, if $|MX| \leq l$ and $|M^*X|>l$ otherwise, for every $X \in  \mathcal{V}_{1}'$, and (b3) $M^* X = MX$, for every $X$ occurring in $\psi_{0}'$ such that $X \in  \mathcal{V}_{1} \setminus \mathcal{V}_{1}'$, then 
  $\model \models \psi_{0}'$ iff $\modelStar \models 
\psi_{0}'$;
\item[(c)] if (c1) $M^* X = MX$, if $|MX| \leq l$ and $|M^*X|>l$ otherwise, for every $X \in  \mathcal{V}_{1}'$,  and (c2) $(MX \mathop{\Delta} MY) \cap D^{*} \neq \emptyset$,\footnote{We
recall that $\mathop{\Delta}$ denotes the symmetric difference operator defined
by $s \mathop{\Delta} t = (s \setminus t) \cup (t \setminus s)$.} for all $X,Y
\in \mathcal{V}_{1}'$ such that $MX \neq MY$, then
$\model \models \psi_{1}$ iff $\modelStar \models 
\psi_{1}$.

\end{itemize}
\end{lemma}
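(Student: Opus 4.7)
The plan is to proceed by case analysis on the syntactic form of the atomic formula, unfolding the definition of $\modelStar = \Rel(\model, D^{*}, d^{*}, \mathcal{V}'_{0}, \mathcal{V}'_{1}, l)$ and invoking the relevant hypotheses among (b1)--(b3) or (c1)--(c2) to verify each equivalence. Part (a) will be immediate from the definition: whenever $Mx \in D^{*}$ one has $M^{*}x = Mx$, and $M^{*}X = MX \cap D^{*}$; hence $x = y$ and $x \in X$ translate without change, since all relevant individual values already lie in $D^{*}$.

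For (b), I first use (b1) to rewrite $\{M^{*}x_{1}, \ldots, M^{*}x_{k}\}$ as $\{Mx_{1}, \ldots, Mx_{k}\}$, and then split on the form of $\psi_{0}'$. When $\psi_{0}'$ is $\{x_{1}, \ldots, x_{k}\} = X$, the sub-case $X \in \mathcal{V}_{1} \setminus \mathcal{V}'_{1}$ is immediate from (b3); for $X \in \mathcal{V}'_{1}$, (b2) gives $M^{*}X = MX$ exactly when $|MX| \leq l$, which is precisely the regime in which a $k$-element enumeration can match $X$ on either side. The delicate case is $\{x_{1}, \ldots, x_{k}\} \in A$: setting $s := \{Mx_{1}, \ldots, Mx_{k}\}$, with $s \in \pow(D^{*})$ and $|s| \leq l$, I would work through sub-cases based on whether $s$ lies in the ``deleted-and-restored'' collections $\pow_{\leq l}(\{M^{*}x : x \in \mathcal{V}'_{0}\})$ and $\{M^{*}X : X \in \mathcal{V}'_{1}\}$ appearing in the definition of $M^{*}A$. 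In each sub-case, unfolding the definition together with (b2) --- which forbids any coincidence $s = M^{*}X$ whenever $|MX| > l$, since then $|M^{*}X| > l \geq |s|$ --- yields $s \in M^{*}A$ iff $s \in MA$.

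For (c), the sub-case $X = Y$ follows directly from $M^{*}X = MX \cap D^{*}$: the forward implication is trivial and the backward one uses (c2) to extract a distinguishing element in $D^{*}$ whenever $MX \neq MY$. For $X \in A$, I would repeat the style of case analysis from (b), now playing (c1) (any ``small'' $M^{*}X$ equals $MX$) against (c2) (injectivity of $X \mapsto M^{*}X$ on $\mathcal{V}'_{1}$) in order to match any witness $M^{*}X \in M^{*}A$ arising from the added-back collection $\{M^{*}Y : Y \in \mathcal{V}'_{1},\, MY \in MA\}$ with the original membership $MX \in MA$.

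The main obstacle is the $\{x_{1}, \ldots, x_{k}\} \in A$ case of (b) --- and, analogously, the $X \in A$ case of (c) --- because the definition of $M^{*}A$ performs a surgery that intermixes a kept fragment, a suppressed fragment and a restored fragment. One must verify carefully that the net effect coincides with plain membership in $MA$, and it is precisely conditions (b2) and (c1) that rule out spurious matches between objects of different cardinalities and thus make the three fragments fit together seamlessly.
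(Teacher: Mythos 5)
Your proposal is correct and follows essentially the same route as the paper's proof: a case split on the syntactic form of the atom, with (a) immediate from $M^{*}x = Mx$ and $M^{*}X = MX \cap D^{*}$, (b) handled by separating $X \in \mathcal{V}'_{1}$ from $X \in \mathcal{V}_{1} \setminus \mathcal{V}'_{1}$ and then tracking the set $\{Mx_{1},\ldots,Mx_{k}\}$ through the removed and restored fragments of $M^{*}A$ using the cardinality dichotomy in (b2), and (c) using (c1) to resolve the $\pow_{\leq l}$ sub-case and (c2) to rule out spurious identifications $M^{*}X = M^{*}Z$. The points you flag as delicate are exactly the ones the paper argues in detail, and your indicated resolutions match theirs.
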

\begin{proof}
Let us prove case (a) first. Assume $\psi_0 = x \in X$. $\model \models x \in X$ if and only if $Mx \in MX$. Since $Mx \in D^{*}$, by Definition \ref{relintrp}, $Mx = M^*x$ and thus  $Mx \in MX$ if and only if $M^*x \in MX$. Since $M^*x  \in D^{*}$,  $M^*x \in MX$ if and only if $M^*x \in MX\cap D^{*}$. Thus, by Definition \ref{relintrp},  $M^*x \in MX\cap D^{*}$ if and only if $M^*x \in M^*X$, and finally $M^*x \in M^*X$ if and only if 
$\modelStar \models  x \in X$, as we wished to prove. Next, let $\psi_0 = x = y$. $\model \models x = y$ if and only if $Mx = My$.  Since $Mx, My \in D^{*}$, by Definition \ref{relintrp}, $Mx = M^*x$ and $My = M^*y$ and thus  $Mx = My$ if and only if $M^*x = M^*y$. Finally $M^*x = M^*y$ if and only if $\modelStar \models  x = y$, and the thesis follows. 

For what concerns case (b), let us assume first that $\psi_0' = \{x_1,\ldots,x_k\} = X$, with $X \in \mathcal{V}_{1}'$. If $\model \models \{x_1,\ldots,x_k\} = X$, then $\{Mx_1,\ldots,Mx_k\} = MX$ and, since  $k \leq l$, $|MX| \leq l$ and therefore $M^*X = MX$. Moreover $Mx_1,\ldots,Mx_k \in D^{*}$ and thus, by Definition \ref{relintrp}, $Mx_i = M^*x_i$, for $i=1,\ldots,k$. Thus, if $\{Mx_1,\ldots,Mx_k\} = MX$, it holds that $\{M^*x_1,\ldots,M^*x_k\} = M^*X$, and finally that $\modelStar \models \{x_1,\ldots,x_k\} = X$, as we wished to prove. Conversely, assume that  $\model \not\models \{x_1,\ldots,x_k\} = X$. Then $\{Mx_1,\ldots,Mx_k\} \neq MX$. If $|MX| \leq l$, $M^*X = MX$, moreover, reasoning as above, $Mx_i = M^*x_i$, for $i=1,\ldots,k$. Thus, if  $\{Mx_1,\ldots,Mx_k\} \neq MX$, it holds that $\{M^*x_1,\ldots,M^*x_k\} \neq M^*X$, hence $\modelStar \not\models \{x_1,\ldots,x_k\} = X$ and the thesis follows.  Finally, if  $|MX| > l$, $|M^*X| > l$ and thus $|M^*X| > k$. As a consequence, it follows that $\{M^*x_1,\ldots,M^*x_k\} \neq M^*X$ and thus $\modelStar \not\models \{x_1,\ldots,x_k\} = X$, as we wished to prove. 
Next, assume that $\psi_0' = \{x_1,\ldots,x_k\} = X$, with $X \in \mathcal{V}_{1} \setminus  \mathcal{V}_{1}'$. Since $M^*X = MX$ and $Mx_i = M^*x_i$, for $i=1,\ldots,k$, $\model \models \{x_1,\ldots,x_k\} = X$ if and only if 
$\{Mx_1,\ldots,Mx_k\} = MX$ if and only if $\{M^*x_1,\ldots,M^*x_k\} = M^*X$ if and only if $\modelStar \models \{x_1,\ldots,x_k\} = X$. Hence, even in this case the thesis holds.

Finally, let $\psi_0' = \{x_1,\ldots,x_k\} \in A$. If $\model \models \{x_1,\ldots,x_k\} \in A$, then $\{Mx_1,\ldots,Mx_k\} \in MA$. In order to show that $\modelStar  \models \{x_1,\ldots,x_k\} \in A$,  we have to prove that $\{M^*x_1,\ldots,M^*x_k\} \in M^*A$. 

Since $Mx_1,\ldots,Mx_k \in D^{*}$, by Definition \ref{relintrp}, $Mx_i = M^*x_i$, for $i=1,\ldots,k$.  Thus
$\{M^*x_1,\ldots,M^*x_k\} = \{Mx_1,\ldots,Mx_k\}$ and 
$\{M^*x_1,\ldots,M^*x_k\} \in MA$. We may have that 
$\{M^*x_1,\ldots,M^*x_k\} \notin M^*A$ only in one of the following two cases. The first case is: $\{M^*x_1,\ldots,M^*x_k\} \in \pow_{\leq l}(\{M^{*}x : x \in
    \mathcal{V}'_{0}\})$ and $\{M^*x_1,\ldots,M^*x_k\} \notin MA$. This cannot occur because in fact $\{M^*x_1,\ldots,M^*x_k\} \in MA$. The other case to be considered is  $\{M^*x_1,\ldots,M^*x_k\} = M^*X$ with $MX \notin MA$, for some $X \in \mathcal{V}_{1}'$. If $\{M^*x_1,\ldots,M^*x_k\} = M^*X$, for some $X \in \mathcal{V}'_{1}$, then $|M^*X| \leq l$ and, therefore, $M^*X = MX$. Since $MX = \{M^*x_1,\ldots,M^*x_k\}$, the assumption $MX \notin MA$ contradicts the hypothesis that  $\{M^*x_1,\ldots,M^*x_k\} \in MA$. 
 Hence, 
 we must admit that if $\{Mx_1,\ldots,Mx_k\} \in MA$, then $\{M^*x_1,\ldots,M^*x_k\} \in M^*A$. 

 On the other hand, if $\model \not\models  \{x_1,\ldots,x_k\} \in A$, then $\{Mx_1,\ldots,Mx_k\} \notin MA$. Assume, by contradiction, that 
$\{M^*x_1,\ldots,M^*x_k\} \in M^*A$. Since $M^*x_i = Mx_i$, for $i=1,\ldots,k$, $\{M^*x_1,\ldots,M^*x_k\} = \{Mx_1,\ldots,Mx_k\}$ and thus, $\{M^*x_1,\ldots,M^*x_k\} \in M^*A$ only in the case $\{M^*x_1,\ldots,M^*x_k\} = M^*Z$, for some $Z \in \mathcal{V}_{1}'$ such that $MZ \in MA$. Since $|M^*Z| \leq l$, it holds that $MZ = M^*Z$ and thus $M^*Z \in MA$, and since $M^*Z = \{M^*x_1,\ldots,M^*x_k\} = \{Mx_1,\ldots,Mx_k\}$, we have $\{Mx_1,\ldots,Mx_k\} \in MA$, absurd. 

Finally, let us prove case (c). Let $\psi_1 = X= Y$. If $\model \models X = Y$, then $MX = MY$. Thus $MX \cap D^* = MY \cap D^*$ and,  by Definition \ref{relintrp}, $M^*X = M^*Y$. Since  $M^*X = M^*Y$, it immediately follows that $\modelStar \models X=Y$. On the other hand, if $\model \not\models X = Y$, then $MX \neq MY$. Thus $(MX \mathop{\Delta} MY) \cap D^{*} \neq \emptyset$ and, consequently, $MX \cap D^* \neq MY \cap D^*$. If $MX \cap D^* \neq MY \cap D^*$, by Definition \ref{relintrp}, $M^*X \neq M^*Y$ and finally $\modelStar \not\models X=Y$. Next, let us assume that $\psi_1 = X \in A$.

If $MX \in MA$, then $M^{*}X \in M^{*}A$ holds trivially.  On the
other hand, if $MX \notin MA$, but $M^{*}X \in M^{*}A$, then either 
$M^*X \in (\pow_{\leq l}(\{M^{*}x : x \in \mathcal{V}'_{0}\}) \cap
    MA)$ or $M^{*}X = M^{*}Z$, for some $Z \in
\mathcal{V}_{1}'$ such that $MZ \in MA$. In the first case, since $|M^*X| \leq l$, by (c1) it holds that $M^* X= MX$ and thus $MX \in MA$, absurd. 
In the other case, since $MZ \in MA$ it holds that $MX \neq
MZ$, and thus, by (c2), $(MX \mathop{\Delta} MZ) \cap D^{*}
\neq \emptyset$. The latter implies $M^{*}X \neq M^{*}Z$, a
contradiction. 
\end{proof}

By propositional logic, Lemma~\ref{wasLe_basic} implies at once the 
following corollary.

\begin{corollary}
    \label{cor_basic} 
Let $\model=(D,M)$ and $\modelStar = \Rel(\model, D^{*}, d^*,
\mathcal{V}'_{0}, \mathcal{V}'_{1}, l)$ be, respectively, a $\TLQST$-interpretation and the relativized interpretation of $\model$ with respect to $D^{*} \subseteq
D$, $d^{*} \in D^{*}$, $\mathcal{V}_{0}' \subseteq
\mathcal{V}_{0}$,  $\mathcal{V}_{1}' \subseteq
\mathcal{V}_{1}$, and $l >0$.  Furthermore, let $\psi$ be a propositional
combination of quantifier-free atomic formulae of the types
$$x = y\,, \;\; x \in X\,,\;\; \{x_1,\ldots,x_k\} = X\,, \;\;\{x_1,\ldots,x_k\} \in A\,, \;\; X = Y \,,\;\; X \in A$$
%
%
such that
\begin{itemize}
    \item  $Mx \in D^{*}$, for every level $0$ variable $x$ in
    $\psi$;
   \item $k \leq l$;    
    \item  $X \in \mathcal{V}_{1}'$, for every level $1$
    variable $X$ in quantifier-free atomic formulae of level $1$ (namely of the form $X = Y$ or $X \in A$) occurring in $\psi$;
    \item $M^*X = MX$ if $|MX| \leq l$ and $|M^*X| > l$, otherwise, for every level $1$
    variable $X\in\mathcal{V}_{1}'$;
    \item  $(MX \mathop{\Delta} MY) \cap D^{*} \neq \emptyset$, for all
    $X,Y \in \mathcal{V}_{1}'$ such that $MX \neq MY$.
    \item $M^*X = MX$, for every  level $1$ variable $X\in\mathcal{V}_{1} \setminus \mathcal{V}_{1}'$ occurring in $\psi$. 
\end{itemize}
Then $\model \models \psi$ if and only if $\modelStar \models \psi$. 
\end{corollary}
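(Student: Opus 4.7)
The plan is to reduce Corollary~\ref{cor_basic} to Lemma~\ref{wasLe_basic} by a routine induction on the propositional structure of $\psi$. The key observation is that the hypotheses listed in the corollary are exactly the union of the hypotheses required by cases (a), (b), and (c) of the lemma, so each atomic subformula of $\psi$ will fall under an applicable case and inherit the biconditional $\model \models \alpha$ iff $\modelStar \models \alpha$.

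First I would enumerate the atoms of $\psi$, which by hypothesis belong to one of the six types $x=y$, $x \in X$, $\{x_1,\ldots,x_k\} = X$, $\{x_1,\ldots,x_k\} \in A$, $X = Y$, or $X \in A$. For atoms of the first two forms, the requirement that $Mx \in D^*$ for every level $0$ variable occurring in $\psi$ supplies exactly the hypothesis of case (a), and Lemma~\ref{wasLe_basic}(a) applies. For enumeration atoms, the condition $k \leq l$ together with the same restriction on level $0$ variables, the condition relating $M^*X$ and $MX$ for $X \in \mathcal{V}_{1}'$, and the stipulation that $M^*X = MX$ for $X \in \mathcal{V}_{1} \setminus \mathcal{V}_{1}'$ jointly yield conditions (b1)--(b3), so case (b) applies. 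For level $1$ atoms, the hypothesis that every $X$, $Y$ occurring in such an atom lies in $\mathcal{V}_{1}'$ ensures that the atom falls within the scope of case (c), whose premises (c1) and (c2) are precisely the fourth and fifth hypotheses of the corollary.

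Having thus established $\model \models \alpha$ iff $\modelStar \models \alpha$ for every atomic subformula $\alpha$ of $\psi$, I would finish by observing that the propositional connectives $\neg$, $\wedge$, $\vee$, $\rightarrow$, and $\leftrightarrow$ are truth-functional, so agreement on atoms propagates upward to any propositional combination of them. A straightforward induction on the connective structure of $\psi$ then delivers $\model \models \psi$ iff $\modelStar \models \psi$, which matches the remark ``by propositional logic'' preceding the statement of the corollary.

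I do not expect any genuine obstacle: the corollary is essentially a bookkeeping repackaging of the lemma. The only point that warrants care is to verify that, for each atomic subformula of $\psi$, the appropriate case of the lemma is in fact applicable; in particular, the third bullet of the hypothesis is precisely what guarantees that the level $1$ atoms of $\psi$ involve only set variables drawn from $\mathcal{V}_{1}'$, which is needed to invoke case (c).
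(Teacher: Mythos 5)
Your proposal is correct and follows exactly the route the paper intends: the paper gives no separate proof, stating only that the corollary follows ``at once, by propositional logic'' from Lemma~\ref{wasLe_basic}, and your case-by-case matching of the corollary's hypotheses to cases (a), (b), (c) of the lemma, followed by truth-functional propagation over the connectives, is precisely that argument spelled out.
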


The preceding corollary yields at once a small model property for the 
collection \TLSZ of propositional combinations of
quantifier-free atomic formulae of the types
$$x = y\,, \;\; x \in X\,,\;\; \{x_1,\ldots,x_k\} = X\,, \;\;\{x_1,\ldots,x_k\} \in A\,, \;\; X = Y \,,\;\; X \in A$$
Indeed, let $\psi$ be a satisfiable \TLSZ-formula and let
$\model=(D,M)$ be a model for it and let $l$ be the maximal length of  finite enumerations $\{x_1,\ldots,x_k\}$ occurring in $\psi$.  Let $\mathcal{V}^{\psi}_{0}$ and
$\mathcal{V}^{\psi}_{1}$ be respectively the collections of variables
of sort 0 and of sort 1 occurring in $\psi$.  
\begin{itemize}
\item For each pair of
variables $X,Y \in \mathcal{V}^{\psi}_{1}$ such that $MX \neq MY$, let
us select an element $d_{XY} \in MX \mathop{\Delta} MY$;
\item construct a set $D_1$ such that $|J \cap D_1| \geq \min(l+1,|J|)$, for every $J \in \{MX : X \in \mathcal{V}^{\psi}_{1}\}$. 
\end{itemize}
 Then put
$
D^{*} = \{Mx : x \in \mathcal{V}^{\psi}_{0}\} \cup (\{d_{XY} : X,Y \in
\mathcal{V}^{\psi}_{1}\,,~ MX \neq MY\} \cup D_1)\,.$
Also, let $d^{*}$ be an arbitrarily chosen element of $D^{*}$.  Then,
from Corollary~\ref{cor_basic} it follows that the relativized
interpretation $\modelStar = \Rel(\model, D^{*}, d^{*}, \mathcal{V}^{\psi}_{0},
\mathcal{V}^{\psi}_{1}, l)$ is a \emph{small} model for $\psi$, as
$|D^{*}| \leq |\mathcal{V}^{\psi}_{0}| + (l+1)|\mathcal{V}^{\psi}_{1}| +
|\mathcal{V}^{\psi}_{1}|^{2}$.  In fact, it can be shown that the
elements $d_{XY}$ in the symmetric differences $MX \mathop{\Delta} MY$
can be selected in such a way that $|D^{*}| < |\mathcal{V}^{\psi}_{0}|
+ (l+2)|\mathcal{V}^{\psi}_{1}|$ holds
(see \cite{CanFer1995}). Summing up, the following result holds:
\begin{lemma}[Small model property for \TLSZ-formulae]
Let $\psi$ be a \TLSZ-formula, i.e., a propositional combination
of
quantifier-free atomic formulae of the following forms
$$x = y\,, \;\; x \in X\,,\;\; \{x_1,\ldots,x_k\} = X\,, \;\;\{x_1,\ldots,x_k\} \in A\,, \;\; X = Y \,,\;\; X \in A$$
and let $\mathcal{V}^{\psi}_{0}$ and $\mathcal{V}^{\psi}_{1}$ be the
collections of variables of sort 0 and of sort 1 occurring in $\psi$,
respectively.  Then $\psi$ is satisfiable if and only if is satisfied
by a $\TLQST$-interpretation $\model=(D,M)$ such that $|D^{*}| < |\mathcal{V}^{\psi}_{0}|
+ (l+2)|\mathcal{V}^{\psi}_{1}|$ .   \qed
\end{lemma}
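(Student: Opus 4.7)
The plan is to invoke Corollary~\ref{cor_basic} on a carefully chosen subdomain $D^{*}$ that simultaneously discharges all six bulleted hypotheses. Starting from a model $\model=(D,M)$ of a satisfiable \TLSZ-formula $\psi$, let $l$ be the maximum length of finite enumerations occurring in $\psi$, and set $\mathcal{V}_{1}' := \mathcal{V}^{\psi}_{1}$. I would build $D^{*}$ as the union of three ingredients, each targeted at one nontrivial hypothesis of the corollary. First, include $\{Mx : x \in \mathcal{V}^{\psi}_{0}\}$, so that every sort $0$ variable in $\psi$ is mapped inside $D^{*}$ (first bullet). Second, include a set $D_{1}$ with $|MX \cap D_{1}| \geq \min(l+1, |MX|)$ for every $X \in \mathcal{V}^{\psi}_{1}$; this guarantees that $MX \cap D^{*}$ coincides with $MX$ when $|MX| \leq l$ and otherwise has cardinality exceeding $l$, which is precisely the fourth bullet. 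Third, include a distinguishing element $d_{XY} \in MX \mathop{\Delta} MY$ for every pair $X,Y \in \mathcal{V}^{\psi}_{1}$ with $MX \neq MY$, securing the fifth bullet.

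Having fixed any $d^{*} \in D^{*}$, I would form $\modelStar := \Rel(\model, D^{*}, d^{*}, \mathcal{V}^{\psi}_{0}, \mathcal{V}^{\psi}_{1}, l)$. The second bullet ($k \leq l$) holds by the choice of $l$, while the third and sixth bullets ($X \in \mathcal{V}_{1}'$ for all level $1$ variables in $\psi$, and $M^{*}X = MX$ for variables outside $\mathcal{V}_{1}'$) hold vacuously because $\mathcal{V}_{1}' = \mathcal{V}^{\psi}_{1}$ contains every level $1$ variable occurring in $\psi$. Corollary~\ref{cor_basic} then yields $\model \models \psi$ iff $\modelStar \models \psi$, producing a model of $\psi$ whose universe is $D^{*}$. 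The converse direction is immediate: any satisfying interpretation witnesses satisfiability.

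The main obstacle is the sharp cardinality bound. The naive union of the three ingredients only gives $|D^{*}| \leq |\mathcal{V}^{\psi}_{0}| + (l+1)|\mathcal{V}^{\psi}_{1}| + |\mathcal{V}^{\psi}_{1}|^{2}$, which is quadratic rather than linear in $|\mathcal{V}^{\psi}_{1}|$. To get down to the claimed $|\mathcal{V}^{\psi}_{0}| + (l+2)|\mathcal{V}^{\psi}_{1}|$, I would invoke the economical selection strategy from \cite{CanFer1995}, which shows that the distinguishing elements $d_{XY}$ can be chosen so that a single element per set variable suffices to separate all inequivalent pairs, contributing only $|\mathcal{V}^{\psi}_{1}|$ in total. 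The new ingredient relative to \cite{CanFer1995} is the set $D_{1}$ demanded by the presence of finite enumerations; I would verify that its elements can be drawn so as to reuse, whenever possible, the $d_{XY}$ already selected, so that $D_{1}$ contributes at most $(l+1)|\mathcal{V}^{\psi}_{1}|$ fresh elements. Combined with the strict inequality already established in the quantifier-free case of \cite{CanFer1995}, this yields the bound stated in the lemma.
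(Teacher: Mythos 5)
Your proposal is correct and follows essentially the same route as the paper: the paper also builds $D^{*}$ from $\{Mx : x \in \mathcal{V}^{\psi}_{0}\}$, distinguishing elements $d_{XY} \in MX \mathop{\Delta} MY$, and a set $D_{1}$ with $|MX \cap D_{1}| \geq \min(l+1,|MX|)$, applies Corollary~\ref{cor_basic} to the relativized interpretation, and then appeals to the economical selection of the $d_{XY}$ from \cite{CanFer1995} to sharpen the naive quadratic bound to the stated one.
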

Since the $\TLQST$-interpretations over a bounded domain are
finitely many and can be effectively generated, the decidability of
the satisfiability problem for \TLSZ-formulae follows.
\subsection{Relativized interpretations and quantified atomic formulae}

To state the main results on quantified formulae, namely that the relativized interpretation $\modelStar = \Rel(\model,
D^{*}, d^*, \mathcal{V}_{0}', \mathcal{V}_{1}', l)$ of a model $\model=(D,M)$ for a
purely universal $\TLQSTR$-formula $\psi$ of level $0$ or $1$ also
satisfies $\psi$ under suitable conditions on $D^{*}$, $\mathcal{V}_{0}'\subseteq \mathcal{V}_{0}$, $\mathcal{V}_{1}'\subseteq \mathcal{V}_{1}$, and $l$ (Lemmas \ref{quantifiedformOne} and \ref{quantifiedformTwo} below), it is convenient to introduce the following
abbreviations: 

%
\begin{eqnarray*}
    \modelzs & \defAs & \Rel(\modelz,D^{*},d^*,\mathcal{V}'_{0},\mathcal{V}'_{1},l)  \\
    \modelsz & \defAs & \modelStar[z_{1}/u_{1},\ldots,z_{n}/u_{n}] \\
    \modelZs & \defAs & \Rel(\modelZ,D^{*},d^*,\mathcal{V}'_{0},\mathcal{V}'_{1}, l)  \\
    \modelsZ & \defAs & 
    \modelStar[Z_{1}/U_{1},\ldots,Z_{m}/U_{m}]\,,
\end{eqnarray*}
with $z_{1},\ldots,z_{n} \in \mathcal{V}_{0} \setminus \mathcal{V}'_{0}$, $Z_{1},\ldots,Z_{m}
\in \mathcal{V}_{1} \setminus \mathcal{V}'_{1}$, $u_{1},\ldots,u_{n}
\in D$, $U_{1},\ldots,U_{m} \subseteq D$.

When $u_{1},\ldots,u_{n} \in D^{*}$, the
$\TLQST$-interpretations $\modelzs$ and $\modelsz$
coincide, as stated in the following lemma, whose proof is routine 
and is omitted for brevity.

\begin{lemma}
\label{le_M*zMz*} 
Let $\model=(D,M)$ be a $\TLQST$-interpretation, $D^{*} \subseteq
D$, $u_{1},\ldots,u_{n} \in D^{*}$, and 
$z_{1},\ldots,z_{n} \in \mathcal{V}_{0}\setminus \mathcal{V}'_{0}$. Then
the $\TLQST$-interpretations 
$\modelzs$ and $\modelsz$ coincide.\qed
\end{lemma}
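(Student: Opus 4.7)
The plan is to unpack the definitions of $\modelzs$ and $\modelsz$ and show that they assign the same value to every variable. Both interpretations already share the domain $D^*$, so it suffices to compare the underlying assignments on individual, set, and collection variables separately.

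For individual variables, I would split on whether the variable is among the substituted $z_i$'s. For $z_i$ itself, the hypothesis $u_i \in D^*$ is precisely what is needed: in $\modelzs$ we first put $M_z z_i = u_i$ and then relativize, which by Definition~\ref{relintrp} leaves $u_i$ unchanged because $u_i \in D^*$; in $\modelsz$ we relativize first and then overwrite $z_i$ with $u_i$. For any individual variable $x$ distinct from the $z_i$'s, the substitution is the identity, so $M_z x = Mx$ and both interpretations assign the relativized value $Mx$ if $Mx \in D^*$ and $d^*$ otherwise. For set variables $X$, the substitution $[z_1/u_1,\ldots,z_n/u_n]$ never touches $X$, hence $M_z X = MX$, and the relativization step yields $MX \cap D^*$ on both sides.

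The only nontrivial case is the collection variables $A$, because the definition of $\Rel$ defining the value on $A$ also refers to the relativized assignments of variables in $\mathcal{V}'_0$ and $\mathcal{V}'_1$. The conceptual crux — which is where the hypothesis $z_1,\ldots,z_n \in \mathcal{V}_0 \setminus \mathcal{V}'_0$ is used — is to observe that for every $x \in \mathcal{V}'_0$, the variable $x$ is not among the $z_i$, so $M_z x = Mx$ and therefore $(M_z)^*x = M^*x$; combined with $(M_z)^*X = M^*X$ for every $X \in \mathcal{V}'_1$ (from the set-variable case) and with $M_z A = MA$, a direct inspection of both summands in the definition of $\Rel$ shows that $(M_z)^*A$ and $M^*A$ compute the same collection. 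Since $\modelsz$ leaves $A$ untouched by the substitution, we conclude $\modelzs(A) = (M_z)^*A = M^*A = \modelsz(A)$, completing the verification. No step presents a real obstacle; the only point requiring care is to trace through the definition of $\Rel$ on collection variables and to note precisely where the two hypotheses $u_i \in D^*$ and $z_i \notin \mathcal{V}'_0$ are invoked.
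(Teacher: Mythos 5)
Your argument is correct and is precisely the "routine" verification that the paper explicitly omits for brevity: a case split on variable sorts, with the hypotheses $u_i \in D^*$ and $z_i \notin \mathcal{V}'_0$ invoked exactly where they are needed (the former so that relativization fixes $u_i$, the latter so that the sets $\{M^{*}x : x \in \mathcal{V}'_{0}\}$ appearing in the collection-variable clause of Definition~\ref{relintrp} are unaffected by the substitution). Nothing is missing.
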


Likewise, under some conditions, the $\TLQST$-interpretations
$\modelZs$ and $\modelsZ$ coincide too, as stated in the following
lemma.


%

\begin{lemma}
\label{le_eqM*ZMZ*} 
Let $\model=(D,M)$ be a $\TLQST$-interpretation, $D^{*} \subseteq
D$, $\mathcal{V}'_{1} \subseteq \mathcal{V}_{1}$, $Z_{1},\ldots,Z_{m}
\in \mathcal{V}_{1} \setminus \mathcal{V}'_{1}$, and
$U_{1},\ldots,U_{m} \in \pow(D^*) \setminus \{M^{*}X : X \in
\mathcal{V}'_{1}\}$.
Then the $\TLQST$-interpretations $\modelZs$ and
$\modelsZ$ coincide. \qed
\end{lemma}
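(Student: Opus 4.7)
The plan is to verify directly that the two $\TLQST$-interpretations coincide by checking that $(M^{\vec{Z}})^{*}$ and $M^{*,\vec{Z}}$ agree on every variable, splitting the analysis according to sort. Both sides have domain $D^{*}$, so only the assignments need to be compared. The structure of the argument parallels that of Lemma~\ref{le_M*zMz*}, but here the perturbation is on set variables rather than on individual variables, which forces a careful check of how the update interacts with each clause of Definition~\ref{relintrp}.

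For individual variables $x$, the fact that each $Z_{j}$ is a set variable means $M^{\vec{Z}}x = Mx$; hence both $(M^{\vec{Z}})^{*}x$ and $M^{*,\vec{Z}}x$ reduce to $M^{*}x$ (identical case split on whether $Mx \in D^{*}$). For set variables $X$, I would distinguish two subcases. If $X \notin \{Z_{1},\ldots,Z_{m}\}$, then $M^{\vec{Z}}X = MX$, so $(M^{\vec{Z}})^{*}X = MX \cap D^{*} = M^{*}X = M^{*,\vec{Z}}X$. If $X = Z_{j}$, then on one side $(M^{\vec{Z}})^{*}Z_{j} = U_{j} \cap D^{*} = U_{j}$, using the hypothesis $U_{j} \in \pow(D^{*})$, while on the other side $M^{*,\vec{Z}}Z_{j} = U_{j}$ by the substitution convention.

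The main obstacle, as usual, is the collection-variable case. For $A \in \mathcal{V}_{2}$, one has $M^{*,\vec{Z}}A = M^{*}A$ (the substitution of set variables does not touch $A$), so it suffices to prove $(M^{\vec{Z}})^{*}A = M^{*}A$. Expanding $(M^{\vec{Z}})^{*}A$ via Definition~\ref{relintrp} and using three observations — $M^{\vec{Z}}A = MA$, $M^{\vec{Z}}X = MX$ for every $X \in \mathcal{V}'_{1}$ (which follows from $Z_{1},\ldots,Z_{m} \in \mathcal{V}_{1} \setminus \mathcal{V}'_{1}$), and $(M^{\vec{Z}})^{*}v = M^{*}v$ for every $v \in \mathcal{V}'_{0} \cup \mathcal{V}'_{1}$ (by the individual and set cases already established) — each occurrence of $M^{\vec{Z}}$ and $(M^{\vec{Z}})^{*}$ in the expression rewrites to the corresponding unmarked quantity, yielding exactly the definition of $M^{*}A$.

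The condition $U_{j} \notin \{M^{*}X : X \in \mathcal{V}'_{1}\}$ does not enter the equality of the assignments on collection variables per se, but it is what guarantees that the new value $(M^{\vec{Z}})^{*}Z_{j} = U_{j}$ stays outside the collection $\{(M^{\vec{Z}})^{*}X : X \in \mathcal{V}'_{1}\}$ (since that collection coincides with $\{M^{*}X : X \in \mathcal{V}'_{1}\}$); this preserves the invariant that $Z_{j}$'s value is not conflated with any relativized value of a variable in $\mathcal{V}'_{1}$, keeping the identity $(M^{\vec{Z}})^{*}Z_{j} = M^{*,\vec{Z}}Z_{j}$ consistent with the disjointness of values implicit in the construction. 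I expect the verification of the collection-variable equation to be the only non-routine step, and the rest to reduce to transparent substitutions that the authors apparently left to the reader.
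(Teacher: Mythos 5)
Your proof is correct and is exactly the routine variable-by-variable verification that the paper leaves implicit (the lemma is stated with its proof suppressed, just as for Lemma~\ref{le_M*zMz*}): domains agree, individual variables are untouched by the substitution, non-quantified set variables give $MX\cap D^{*}=M^{*}X$ on both sides, the $Z_{j}$ give $U_{j}\cap D^{*}=U_{j}$ by the hypothesis $U_{j}\in\pow(D^{*})$, and the collection-variable clause of Definition~\ref{relintrp} is unaffected because $Z_{1},\ldots,Z_{m}\notin\mathcal{V}'_{1}$. Your side remark that the hypothesis $U_{j}\notin\{M^{*}X:X\in\mathcal{V}'_{1}\}$ is not actually used in establishing the coincidence is also accurate; it is carried along because it holds in the one place the lemma is invoked (Lemma~\ref{quantifiedformTwo}).
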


We are now ready to prove the main result of the present section,
namely that if $\model=(D,M)$ satisfies a purely universal
$\TLQSTR$-formula $\psi$ of level $0$ or $1$, then, under suitable
conditions, the relativized interpretation $\modelStar = \Rel(\model,
D^{*}, d^*, \mathcal{V}_{0}', \mathcal{V}_{1}', l)$ of $\model$ satisfies $\psi$ too.
This will be done in the following two lemmas.

\begin{lemma}
\label{quantifiedformOne} 
Let $\model=(D,M)$ be a $\TLQST$-interpretation, $D^{*} \subseteq D$, $d^{*} \in D^{*}$, $\mathcal{V}'_{0} \subseteq \mathcal{V}_{0}$, $\mathcal{V}'_{1} \subseteq \mathcal{V}_{1}$, $l >0$,
and let $\modelStar = \Rel(\model, D^{*}, d^{*},
\mathcal{V}'_{0}, \mathcal{V}'_{1}, l)$ be such that 
$M^* X = MX$, if $|MX| \leq l$ and $|M^*X|>l$ otherwise, for every $X \in  \mathcal{V}_{1}'$.  Furthermore, let $(\forall z_1) \ldots (\forall
z_n) \varphi_0$ be a purely universal $\TLQSTR$-formula of level $0$
such that 
\begin{itemize}
\item [(i)] $Mx \in D^*$, for every $x \in \mathcal{V}_{0}$ occurring
free in it;
\item [(ii)] Each occurrence of finite enumeration $\{x_1,\ldots,x_k\}$ in $\psi$, with $x_i \in \mathcal{V}_{0}$, for every $i \in \{1,\ldots,k\}$, is such that $k \leq l$;
\item [(iii)] $\{z_1,\ldots,z_n\} \in \mathcal{V}_{0} \setminus \mathcal{V}_{0}'$;
\item [(iv)] $M^*X = MX$, for every variable $X$ of level $1$ in $\psi$ such that $X \in \mathcal{V}_{1}\setminus \mathcal{V}_{1}'$. 
\end{itemize}
 Then 
\[
\model \models (\forall z_1) \ldots (\forall z_n) \varphi_0 \quad
\Longrightarrow \quad \modelStar \models (\forall z_1) \ldots (\forall
z_n) \varphi_0\,.
\]
\end{lemma}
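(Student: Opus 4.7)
The plan is to reduce the purely universal statement to the quantifier-free case, handled by Lemma~\ref{wasLe_basic}. I first fix arbitrary elements $u_1^*, \ldots, u_n^* \in D^*$; since $D^* \subseteq D$, the hypothesis $\model \models (\forall z_1)\ldots(\forall z_n)\varphi_0$ immediately yields $\modelz \models \varphi_0$, where $\modelz = \model[z_1/u_1^*, \ldots, z_n/u_n^*]$. The goal is then $\modelsz \models \varphi_0$. By hypothesis (iii) together with $u_i^* \in D^*$, Lemma~\ref{le_M*zMz*} applies and gives $\modelzs = \modelsz$. Hence it suffices to establish the implication $\modelz \models \varphi_0 \Rightarrow \modelzs \models \varphi_0$.

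For this implication I would apply clauses (a) and (b) of Lemma~\ref{wasLe_basic} atom-by-atom to the propositional combination $\varphi_0$ (which, being of level $0$, contains no atoms of the form $X = Y$ or $X \in A$, so clause (c) never needs to be invoked), and then lift the local equivalences by propositional reasoning. The relevant relativization is $\modelzs = \Rel(\modelz, D^*, d^*, \mathcal{V}'_0, \mathcal{V}'_1, l)$, and the hypotheses of Lemma~\ref{wasLe_basic} must be verified for the updated assignment $M^z$. Every level $0$ variable occurring in any atom of $\varphi_0$ is sent by $M^z$ into $D^*$: for the bound variables this is $M^z z_i = u_i^* \in D^*$, while for the remaining variables, which are free in $(\forall z_1)\ldots(\forall z_n)\varphi_0$, hypothesis (i) together with $M^z x = M x$ does the job. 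The length bound $k \leq l$ on any finite enumeration appearing in $\varphi_0$ is exactly hypothesis (ii). Since $M^z$ and $M$ coincide on all level $1$ variables, one has $M^{z,*}X = M^*X$ for every such $X$; consequently clause (b2), concerning $X \in \mathcal{V}'_1$, is inherited from the assumption on $\modelStar$, and clause (b3), concerning $X \in \mathcal{V}_1 \setminus \mathcal{V}'_1$ occurring in an atom of the form $\{x_1,\ldots,x_k\} = X$, is exactly hypothesis (iv).

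The work is mostly bookkeeping; the genuinely delicate point is the identification $\modelzs = \modelsz$ obtained via Lemma~\ref{le_M*zMz*}. This is where hypothesis (iii) plays an essential role: the definition of the relativization involves $\mathcal{V}'_0$ through the term $\pow_{\leq l}(\{M^{*}x : x \in \mathcal{V}'_0\})$ in the clause defining $M^*A$, so only when the bound variables $z_1, \ldots, z_n$ lie outside $\mathcal{V}'_0$ do the assignment update on those variables and the relativization operator commute. Without (iii), assigning new values to the $z_i$ would in general alter the set of enumerated collection values permitted in $M^*A$, and the reduction to the atomic case would break down.
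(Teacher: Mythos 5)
Your proposal is correct and follows essentially the same route as the paper's proof: fix the witnesses $u_1,\ldots,u_n \in D^*$, use hypothesis (iii) with Lemma~\ref{le_M*zMz*} to identify $\modelzs$ with $\modelsz$, and then invoke clauses (a) and (b) of Lemma~\ref{wasLe_basic} after checking that the updated assignment $M^{\vec z}$ satisfies their hypotheses via (i), (ii), (iv) and the assumption on $M^*$ over $\mathcal{V}'_1$. The only cosmetic difference is that the paper argues by contradiction while you argue directly, which changes nothing of substance.
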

\begin{proof}
Let $\model$ and $\modelStar$ be as in the lemma, and assume that
$\model \models (\forall z_1) \ldots (\forall z_n) \varphi_0$ whereas
$\modelStar \not\models (\forall z_1) \ldots (\forall z_n) \varphi_0$.
Then there must exist $u_1,\ldots,u_n \in D^*$ such that
$\modelStar[z_{1}/u_{1},\ldots,z_{n}/u_{n}] \not\models \varphi_0$,
i.e., $\modelsz\not\models \varphi_0$.  Since, by (iii), $\{z_1,\ldots,z_n\} \in \mathcal{V}_{0} \setminus \mathcal{V}_{0}'$, by
Lemma~\ref{le_M*zMz*}, $\modelzs \not\models \varphi_0$.  

By (i) and by the definition of $\model^z$, it is easy to see that
$\assignz x \in D^{*}$, for every $x \in
\mathcal{V}_{0}$ occurring in $\varphi_{0}$. Moreover, by (ii) each occurrence of finite enumeration $\{x_1,\ldots,x_k\}$ in $\varphi_{0}$, with $x_i \in \mathcal{V}_{0}$, for every $i \in \{1,\ldots,k\}$, is such that $k \leq l$. Finally, since $M^z X = MX$ and $M^{z,*} X= M^{*,z} X = M^* X$, for every variable $X \in \mathcal{V}_{1}$, it can be checked that
\begin{itemize}
\item $M^{z,*}X = M^zX$, for every variable $X$ of level $1$ occurring in $\psi$ such that $X \in \mathcal{V}_{1}\setminus \mathcal{V}_{1}'$ (by (iv)), and 
\item  $M^{z,*} X = M^z X$, if $|M^zX| \leq l$ and $|M^{z,*}X|>l$ otherwise, for every $X \in  \mathcal{V}_{1}'$ (because $M^* X = MX$, if $|MX| \leq l$ and $|M^*X|>l$ otherwise, for every $X \in  \mathcal{V}_{1}'$). 
\end{itemize}
Thus, by Lemma
\ref{wasLe_basic} (a) and (b) we have $\modelz \not\models \varphi_{0}$,
which yields $\model \not\models (\forall z_1) \ldots (\forall z_n)
\varphi_0$, a contradiction.
\end{proof}

\begin{lemma}
\label{quantifiedformTwo} 
Let $\model=(D,M)$ be a $\TLQST$-interpretation, $D^{*} \subseteq
D$, $d^{*} \in D^{*}$, $\mathcal{V}'_{0} \subseteq \mathcal{V}_{0}$, $\mathcal{V}'_{1} \subseteq \mathcal{V}_{1}$, $l >0$,
$\modelStar = \Rel(\model, D^{*}, d^{*}, \mathcal{V}'_{0}, \mathcal{V}'_{1},l)$, and let
$(\forall Z_1) \ldots (\forall Z_m) \varphi_1$ be a purely universal
$\TLQSTR$-formula of level $1$ such that 
\begin{enumerate}[label=({\roman*})]
   \item\label{new_i} $Z_{1},\ldots,Z_{m} \notin \mathcal{V}'_{1}$;
   
   \item\label{old_ii} 
   $X \in \mathcal{V}'_1$, for every variable $X \in
   \mathcal{V}_1$ occurring free in $(\forall Z_1) \ldots (\forall
   Z_m) \varphi_1$;
   
   \item\label{old_i} 
   $Mx \in D^*$, for every $x \in \mathcal{V}_{0}$ occurring free in
   $(\forall Z_1) \ldots (\forall Z_m) \varphi_1$;
   
   \item\label{newtup_i} 
   $M^* X = MX$, if $|MX| \leq l$ and $|M^*X|>l$ otherwise, for every $X \in  \mathcal{V}_{1}'$;
      
   \item\label{old_iii} 
   $(MX \mathop{\Delta} MY) \cap D^{*} \neq \emptyset$, for all $X,Y
   \in \mathcal{V}'_1$ such that $MX \neq MY$;
   
   \item\label{newtup_ii} 
   each occurrence of finite enumeration $\{x_1,\ldots,x_k\}$ in $\varphi_1$, with $x_i \in \mathcal{V}_{0}$, for every $i \in \{1,\ldots,k\}$, is such that $k \leq l$; 
      
   \item\label{old_iv} 
   for every purely universal formula $(\forall z_1) \ldots (\forall
   z_n) \varphi_0$ of level $0$ occurring in $\varphi_{1}$ and
   variables $X_{1}, \ldots,X_{m} \in \mathcal{V}_{1}'$ such that
   $\model \not\models ((\forall z_1) \ldots (\forall z_n)
   \varphi_0)_{X_1,\ldots,X_m}^{Z_1,\ldots,Z_m}$, there are
   $u_1,\ldots,u_n \in D^*$ such that
   $\model[z_1/u_1,\ldots,z_n/u_n]\not\models
   (\varphi_0)_{X_1,\ldots,X_m}^{Z_1,\ldots,Z_m}$;\footnote{Given a
   formula $\psi$ and variables $X_1,\ldots,X_m,Z_1,\ldots,Z_m$, by
   $\psi_{X_1,\ldots,X_m}^{Z_1,\ldots,Z_m}$ we mean the formula
   obtained by simultaneously substituting 
   $Z_1,\ldots,Z_m$ with $X_1,\ldots,X_m$ in $\psi$.}
 \item\label{newtup_iii}  
 for every purely universal formula $(\forall z_1) \ldots (\forall
   z_n) \varphi_0$ of level $0$ occurring in $\varphi_{1}$, $\{z_1,\ldots,z_n\} \in \mathcal{V}_{0} \setminus \mathcal{V}_{0}'$.
\end{enumerate}
Then
\[
\model \models (\forall Z_1) \ldots (\forall Z_m) \varphi_1 \quad
\Longrightarrow \quad \modelStar \models (\forall Z_1) \ldots (\forall
Z_m) \varphi_1\,.
\]
\end{lemma}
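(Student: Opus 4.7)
The argument is by contradiction. Suppose $\modelStar \not\models (\forall Z_1) \cdots (\forall Z_m) \varphi_1$, so that there exist $U_1, \ldots, U_m \subseteq D^*$ with $\modelsZ \not\models \varphi_1$. The pivotal step is to partition the indices according to whether $U_j$ coincides with some $M^*X$ for $X \in \mathcal{V}_1'$: set $I := \{j : U_j = M^*X_j \mbox{ for some } X_j \in \mathcal{V}_1'\}$, fixing such an $X_j$ for each $j \in I$, and $J := \{1,\ldots,m\} \setminus I$. Let $\sigma$ denote the syntactic substitution $\{Z_j \mapsto X_j : j \in I\}$. Since each $X_j$ lies in $\mathcal{V}_1'$ whereas, by~\ref{new_i}, no $Z_j$ does, the standard substitution-of-values principle turns $\modelsZ \not\models \varphi_1$ into $\modelStar[Z_j / U_j : j \in J] \not\models (\varphi_1)\sigma$. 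Because $U_j \in \pow(D^*) \setminus \{M^* X : X \in \mathcal{V}_1'\}$ for each $j \in J$, Lemma~\ref{le_eqM*ZMZ*} gives $\modelStar[Z_j/U_j : j \in J] = (\model')^*$, where $\model' := \model[Z_j/U_j : j \in J]$. Hence $(\model')^* \not\models (\varphi_1)\sigma$.

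I would then prove, by induction on the propositional structure of $(\varphi_1)\sigma$, the equivalence $\model' \models (\varphi_1)\sigma$ iff $(\model')^* \models (\varphi_1)\sigma$. For quantifier-free atoms, Lemma~\ref{wasLe_basic} applies, its hypotheses being furnished by~\ref{old_i},~\ref{newtup_i},~\ref{old_iii} and~\ref{newtup_ii}; the only subtlety is a level-$1$ atom containing some $Z_j$ with $j \in J$, which falls outside case~(c) of Lemma~\ref{wasLe_basic} but is resolved by a short direct calculation exploiting $(M')^*Z_j = U_j = M'Z_j$ together with $U_j \notin \{M^*X : X \in \mathcal{V}_1'\}$ (so, for instance, $(\model')^* \models Z_j = Y$ with $Y \in \mathcal{V}_1'$ would force $U_j = M^*Y$, contradicting the definition of $J$). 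For each level-$0$ universal subformula $(\forall \vec{z})\chi_0$ of $(\varphi_1)\sigma$, Lemma~\ref{quantifiedformOne} delivers the forward direction $\model' \models \Rightarrow (\model')^* \models$, with its premises supplied by~\ref{old_ii},~\ref{old_i},~\ref{newtup_i},~\ref{newtup_ii} and~\ref{newtup_iii}.

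The hard part is the converse direction for these level-$0$ subformulas, namely $(\model')^* \models (\forall \vec{z})\chi_0 \Rightarrow \model' \models (\forall \vec{z})\chi_0$, which must accommodate $\vec{u} \in D^n$ with coordinates possibly outside $D^*$. This is exactly where the restricted-quantification condition~(\ref{condition}) intervenes. Applied to $\varphi_0$ and pushed through $\sigma$, it yields the valid implication
\[
\neg \chi_0 \rightarrow \bigwedge_{i=1}^n \Bigl(\bigwedge_{j \in I} z_i \in X_j \;\wedge\; \bigwedge_{j \in J} z_i \in Z_j\Bigr).
\]
If $J \neq \emptyset$, any $\vec{u}$ falsifying $\chi_0$ in $\model'[\vec{z}/\vec{u}]$ must satisfy $u_i \in U_{j_0} \subseteq D^*$ for any fixed $j_0 \in J$, so $\vec{u} \in (D^*)^n$ and the case closes via Lemma~\ref{le_M*zMz*} and the atomic-level equivalence just established. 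If $J = \emptyset$, then $\model' = \model$, $(\model')^* = \modelStar$, and $\sigma$ substitutes every $Z_j$; hypothesis~\ref{old_iv} then directly supplies $D^*$-witnesses $\vec{u}$ falsifying $(\varphi_0)_{\vec{X}}^{\vec{Z}}$ in $\model[\vec{z}/\vec{u}]$, and Corollary~\ref{cor_basic} combined with Lemma~\ref{le_M*zMz*} transports the failure to $\modelStar[\vec{z}/\vec{u}]$, contradicting $\modelStar \models (\forall \vec{z})\chi_0$.

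Finally, having $\model' \not\models (\varphi_1)\sigma$, the contradiction is closed by instantiating the hypothesis $\model \models (\forall Z_1) \cdots (\forall Z_m)\varphi_1$ with $Z_j \mapsto MX_j$ for $j \in I$ and $Z_j \mapsto U_j$ for $j \in J$; the substitution-of-values principle reinterprets this as $\model' \models (\varphi_1)\sigma$, yielding the desired contradiction.
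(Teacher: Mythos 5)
Your proposal is correct and follows essentially the same route as the paper's own proof: the split of the $U_j$'s into those matching some $M^*X$ with $X\in\mathcal{V}_1'$ and the rest (your $I$/$J$ is the paper's $h$ versus $m-h$), the substitution $\sigma$ (the paper's $\bar\varphi_1$), the appeal to Lemma~\ref{le_eqM*ZMZ*}, the component-by-component comparison via Lemma~\ref{wasLe_basic} plus a direct computation for atoms containing the residual $Z_j$'s, and the two-case treatment of level-$0$ universal subformulae using the linkedness condition when $J\neq\emptyset$ and hypothesis~\ref{old_iv} when $J=\emptyset$. No substantive differences to report.
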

\begin{proof}
Let $\model$, $\modelStar$, and $(\forall Z_1) \ldots (\forall Z_m)
\varphi_1$ be as in the lemma, and assume that $\model \models
(\forall Z_1) \ldots (\forall Z_m) \varphi_1$ whereas $\modelStar
\not\models (\forall Z_1) \ldots (\forall Z_m) \varphi_1$.  Then there
must exist $U_1, \ldots, U_m \subseteq D^*$ such that
$\modelStar[Z_{1}/U_{1},\ldots,Z_{m}/U_{m}] \not\models \varphi_1$,
i.e.,
\begin{equation}
    \label{eq_MstarZ}
    \modelsZ\not\models\varphi_1\,.
\end{equation}

Without loss of generality, we may assume that there exists $0 \leq h
\leq m$ such that
\begin{itemize}
    \item $U_i = \assignStar X_i$, for $1 \leq i \leq h$, for some
    variables $X_1,\ldots,X_h$ in $\mathcal{V}_1'$, and
    
    \item $U_j \notin \{\assignStar X : X \in \mathcal{V}_1'\}$, for
    all $h+1 \leq j \leq m$.
\end{itemize}
    

Let $\bar{\varphi}_1 \defAs (\varphi_{1})^{Z_{1}\ldots
Z_{h}}_{X_{1}\ldots X_{h}}$ (i.e., $\bar{\varphi}_1$ is the formula
obtained by simultaneously substituting $Z_1,\ldots,Z_h$ with
$X_1,\ldots,X_h$ in $\varphi_1$) and let 
\[
\modelZm \defAs
\model[Z_{h+1}/U_{h+1},\ldots,Z_{m}/U_m]\,.
\]

Our plan is to show that 
\begin{equation}
    \label{modelZm}
    \modelZm \not\models \bar{\varphi}_1
\end{equation}
holds.  Then, since (\ref{modelZm}) readily implies
\begin{equation}
    \label{eq_five}
    \modelZp \not\models \varphi_1\,,
\end{equation}
where
\[
\modelZp \defAs
\model[Z_{1}/MX_{1},\ldots,Z_{h}/MX_{h},
Z_{h+1}/U_{h+1},\ldots,Z_{m}/U_{m}]\,,
\]
and (\ref{eq_five}) in its turn yields $\model \not\models (\forall
Z_1) \ldots (\forall Z_m) \varphi_1$, a contradiction would be
derived, proving that $\model \models (\forall Z_1) \ldots (\forall
Z_m) \varphi_1$ implies $\modelStar \models (\forall Z_1) \ldots
(\forall Z_m) \varphi_1$ (and hence completing the proof of the
lemma).

Thus, in what follows we will just show that (\ref{eq_MstarZ}) implies
(\ref{modelZm}).

To begin with, let $\modelsZm \defAs
\modelStar[Z_{h+1}/U_{h+1},\ldots,Z_{m}/U_m]$.  Plainly,
(\ref{eq_MstarZ}) implies at once $\modelsZm \not \models
\bar{\varphi}_1$.  Since, by hypothesis \ref{new_i} of the lemma and by
Lemma~\ref{le_eqM*ZMZ*}, $\modelsZm$ and $\modelZms$ coincide, so that
$\modelZms \not \models \bar{\varphi}_1$ holds, to prove
(\ref{modelZm}) it will be enough, by propositional logic, to show
that $\modelZms$ and $\modelZm$ coincide on all propositional
components\footnote{\label{myFootnote}By definition, a formula $\psi$
of $\TLQST$ is a propositional combination of certain atomic formulae
of level $0$, $1$, and $2$.  These are the \textsc{propositional
components} of $\psi$.} of $\bar{\varphi}_1$, which is what we do
next.


By hypotheses \ref{old_ii}, \ref{old_i}, \ref{newtup_i}, \ref{old_iii}, and \ref{newtup_ii} of
the lemma and by Lemma~\ref{wasLe_basic}, $\modelZms$ and $\modelZm$
coincide on all propositional components of $\bar{\varphi}_1$ of any
of the following types:
\begin{itemize}
    \item $x = y$, $x \in X$ ~(with $x,y \in \mathcal{V}_{0}$ and $X
    \in \mathcal{V}_{1}$), 
    \item $\{x_1,\ldots,x_k\} = X$,  $\{x_1,\ldots,x_k\} \in A$ ~(with $x_1,\ldots,x_k \in \mathcal{V}_{0}$, $X \in \mathcal{V}_{1}$, and  $A \in \mathcal{V}_{2}$), and
    \item $X = Y$, $X \in A$ ~(with $X,Y \in \mathcal{V}_{1}'$ and
    $A \in \mathcal{V}_{2}$).
\end{itemize}
Thus, to complete the proof, we are only left with showing that
$\modelZms$ and $\modelZm$ coincide also on the propositional
components of $\bar{\varphi}_1$ of the remaining types, namely those
of the form:
\begin{itemize}
    \item $Z_{j} = X$, $X = Z_{j}$, $Z_{j} \in A$\\ (with $X \in
    \mathcal{V}_{1}' \cup \{Z_{h+1},\ldots,Z_{m}\}$, $A \in \mathcal{V}_{2}$, and $h+1 \leq j \leq m$), and
    
    \item level $0$ purely universal formulae.
\end{itemize}

For propositional components of $\bar{\varphi}_1$ of type $Z_{j} = X$
(with $X \in \mathcal{V}_{1} \setminus \{Z_{1},\ldots,Z_{h}\}$ and $h+1 \leq j \leq m$), we have:
\[
\begin{array}{rcl}
\modelZms \models Z_{j} = X &\Longleftrightarrow &
U_{j} = \assignZm X \cap D^{*}\\
&\Longleftrightarrow & 
X \equiv Z_{i}, \text{ for some } i \in \{h+1, \ldots,m\}\\
&&\text{ such that } U_{i} = U_{j}\\
&\Longleftrightarrow & \modelZm \models Z_{j} = X\,.
\end{array}
\]
Analogously, for propositional components of $\bar{\varphi}_1$ of type
$X = Z_{j}$, with $X \in \mathcal{V}_{1}\setminus
\{Z_{1},\ldots,Z_{h}\}$ and $h+1 \leq j \leq m$.

For propositional components of $\bar{\varphi}_1$ of type $Z_{j} \in
A$ ~(with $A \in \mathcal{V}_{2}$ and $h+1
\leq j \leq m$), we have:
\[
\begin{array}{rcl}
\modelZms \models Z_{j} \in A &\Longleftrightarrow &
U_{j} \in \big((\assignZm A \cap \pow(D^{*})) \setminus 
(\{\assignZms X: X \in \mathcal{V}'_{1}\}\\
&& \qquad \quad ~~ {} \cup \; \pow_{\leq l}(\{\assignZms x : x \in \mathcal{V}'_{0}\}) \big)\\
&& \qquad \quad ~~ {} \cup \big(\{\assignZms X: X \in 
\mathcal{V}'_{1},~\assignZm X
\in \assignZm A\}\\
& & \qquad \quad ~~ {} \cup \; (\pow_{\leq l}(\{\assignZms x : x \in \mathcal{V}'_{0}\}) \cap \assignZm A)\big) \\
&\Longleftrightarrow & U_{j} \in \assignZm A \hfill \text{{\small (since $U_j \notin
\{\assignStar X : X \in \mathcal{V}_1'\}$ }}\\
& & \hfill \text{{\small and $U_j \in \pow(D^{*})$)}}\\
&\Longleftrightarrow & \assignZm Z_{j} \in \assignZm A\\
&\Longleftrightarrow & \modelZm \models Z_{j} \in A\,.
\end{array}
\]


Finally, let $(\forall z_1) \ldots (\forall z_n) \varphi_0$ be a
propositional component of $\varphi_{1}$ and let 
$\bar\varphi_0 \defAs (\varphi_0)_{X_1,\ldots,X_h}^{Z_1,\ldots,Z_h}$.
We show that 
\begin{equation}
\label{eqEquivalence}
\modelZms \models (\forall z_1) \ldots (\forall z_n) \bar\varphi_0
\quad \Longleftrightarrow \quad
\modelZm \models (\forall z_1) \ldots (\forall z_n) \bar\varphi_0\,.
\end{equation}

Let us first assume that 
\begin{equation}
    \label{initialAssumption}
    \modelZms \models (\forall z_1) \ldots (\forall z_n) \bar\varphi_0
\end{equation}
but, by way of contradiction, that 
\begin{equation}
    \label{contradictoryAssumption}
    \modelZm \not\models (\forall z_1) \ldots (\forall z_n)
    \bar\varphi_0\,.
\end{equation}
We will distinguish two cases, according to whether $h< m$ (i.e.,
$\{U_1, \ldots, U_{m}\} \not\subseteq \{\assignStar X : X \in
\mathcal{V}_1'\}$) or $h = m$ (i.e., $\{U_1, \ldots, U_{m}\} \subseteq
\{\assignStar X : X \in \mathcal{V}_1'\}$).

\smallskip

\noindent \textbf{Case $h < m$\,:} %
From $\modelZm \not\models (\forall z_1) \ldots (\forall z_n)
\bar\varphi_0$, it follows that there exist $u_{1},\ldots,u_{n} \in D$
such that $\modelZm[z_{1}/u_{1},\ldots,z_{n}/u_{n}] \not\models
\bar\varphi_0$. Let us put $\modelZmz \defAs \modelZm[z_{1}/u_{1},\ldots,z_{n}/u_{n}]$. Then 
we have
\begin{equation}
    \label{temp1}
    \modelZmz \models \neg
    \bar\varphi_0\,.
\end{equation}
Recalling that by definition of $\TLQSTR$-formulae (cf.\
Section~\ref{restrictionquant}) the formula $(\forall z_1) \ldots
(\forall z_n) \varphi_0$ must be linked to the variables
$Z_{1},\ldots,Z_{m}$, then we have
\[
\models \neg \varphi_0 \rightarrow \bigwedge_{i=1}^n \bigwedge_{j=1}^m  z_i \in Z_j
\]
(cf.\ condition (\ref{condition})), so that
\[
\models \Bigg(\neg \varphi_0 \rightarrow \bigwedge_{i=1}^n
\bigwedge_{j=1}^m z_i \in 
Z_j\Bigg)\!\rule[-3mm]{0pt}{8mm}_{X_{1},\ldots,X_{h}}^{Z_{1},\ldots,Z_{h}}\,,
\]
i.e.,
\[
\models \neg \bar\varphi_0 \rightarrow \bigwedge_{i=1}^n
\Bigg(\bigwedge_{j=1}^h z_i \in X_j \wedge \bigwedge_{j=h+1}^m z_i \in
Z_j\Bigg)\,.
\]
Thus, by (\ref{temp1}),
\[
\modelZmz \models \bigwedge_{i=1}^n \Bigg(\bigwedge_{j=1}^h z_i \in
X_j \wedge \bigwedge_{j=h+1}^m
  z_i \in Z_j\Bigg)\,,
\]
so that, for $i=1,\ldots,n$,
\[
\modelZmz \models z_i \in Z_m\,.
\]
Therefore, for  $i=1,\ldots,n$,
\begin{equation}
\label{eq_uiInDstar}
u_{i} = \assignZmz z_{i} \in \assignZmz Z_{m} = U_{m} \subseteq D^{*}\,. 
\end{equation}
In view of (\ref{eq_uiInDstar}) and by conditions \ref{old_ii}, \ref{old_i}, \ref{newtup_i}, \ref{old_iii}, \ref{newtup_ii}, and \ref{newtup_iii}  of
this lemma, we can apply Corollary~\ref{cor_basic} and Lemma \ref{le_M*zMz*} in the deductions
which follow:
\[
\begin{array}{rcl}
    \modelZmz \models \neg \bar\varphi_0 &\Longrightarrow &
    (\modelZm)^{\vec{z}} \models \neg \bar\varphi_0 \\   
    &\Longrightarrow &
    (\modelZm)^{\vec{z},*} \models \neg \bar\varphi_0 \hfill 
    \phantom{aaaaaa}\text{{\small (from
    (\ref{eq_uiInDstar}), conditions  \ref{old_ii}, \ref{old_i},  }} \\
    &&\hfill\text{{\small \ref{newtup_i}, \ref{old_iii}, and \ref{newtup_ii} of  the)}} \\[4pt] 
    &&\hfill\text{{\small present lemma, and 
    Corollary~\ref{cor_basic})}} \\[4pt] 
    &\Longrightarrow &
    (\modelZm)^{*,\vec{z}} \models \neg \bar\varphi_0
    \hfill\text{{\small (from (\ref{eq_uiInDstar}), condition \ref{newtup_iii},}} \\[4pt] & & \hfill\text{{\small  and Lemma \ref{le_M*zMz*})}}
    \\[4pt] &\Longrightarrow & \modelZms \not\models (\forall
    z_{1})\ldots(\forall z_{n}) \bar\varphi_0 \,.
\end{array}
\]
Hence $\modelZms \not\models (\forall z_{1})\ldots(\forall z_{n})
\bar\varphi_0$ holds, contradicting our initial assumption
(\ref{initialAssumption}) and therefore proving that the case $k < m$
can not arise.

\smallskip

\noindent \textbf{Case $h=m$\,:} When $h = m$, the interpretations
$\modelZm$ and $\modelZms$ are just $\model$ and $\modelStar$,
respectively.  Thus, our contradictory assumption
(\ref{contradictoryAssumption}) becomes $\model \not\models (\forall
z_{1})\ldots(\forall z_{n}) \bar\varphi_0$, which, by condition
\ref{old_iv} of the lemma, implies the existence of elements
\begin{equation}\label{uInDStar}
u_{1},\ldots,u_{n} \in D^{*}
\end{equation}
such that
$\model[z_{1}/u_{1},\ldots,z_{n}/u_{n}] \not\models \bar\varphi_0$,
i.e., $\modelz \not\models \bar\varphi_0$.  But,
\[
\begin{array}{rcll}
    \modelz \not\models \bar\varphi_0 & \Longrightarrow & \modelzs
    \not\models \bar\varphi_0 \hspace{2cm} \hfill \text{{\small (from (\ref{uInDStar}),
    conditions \ref{old_ii}, \ref{old_i}, \ref{newtup_i}, \ref{old_iii}, }}\\
    && \hfill\text{{\small and \ref{newtup_ii} of the lemma, and Corollary~\ref{cor_basic})}} \\[4pt]
    &\Longrightarrow&\modelsz \not\models \bar\varphi_0 \hfill
    \text{{\small (from (\ref{uInDStar}), condition \ref{newtup_iii}, and Lemma \ref{le_M*zMz*})}} \\[4pt]
    &\Longrightarrow&\modelStar\not\models (\forall
    z_{1})\ldots(\forall z_{n}) \bar\varphi_0 \,.
\end{array}
\]
Therefore $\modelZms\not\models (\forall z_{1})\ldots(\forall z_{n})
\bar\varphi_0$, which contradicts our assumption
(\ref{initialAssumption}).  Thus, even the current case $h = m$ can
not arise.  Since in any case we get a contradiction, we have the
following implication:
\[
\modelZms \models (\forall z_1) \ldots (\forall z_n) \bar\varphi_0
\quad \Longrightarrow \quad
\modelZm \models (\forall z_1) \ldots (\forall z_n) \bar\varphi_0\,.
\]
To complete the proof of (\ref{eqEquivalence}), we need to establish
also the converse implication.  But this follows at once, by observing
that if $\modelZm \models (\forall z_1) \ldots (\forall z_n)
\bar\varphi_0$, then by conditions \ref{old_i}, \ref{newtup_ii}, and \ref{newtup_iii} of the lemma and by
Lemma~\ref{quantifiedformOne} we have $\modelZms \models (\forall z_1)
\ldots (\forall z_n) \bar\varphi_0$.

This concludes the proof of the lemma.
\end{proof}

\section{The satisfiability problem for
$\TLQSTR$-formulae}\label{satisfiability}
We will solve the satisfiability problem for $\TLQSTR$
, i.e., the
problem of establishing for any given formula of $\TLQSTR$ whether it
is satisfiable or not, 
as follows:
\begin{itemize}
    \item[(a)] firstly, we will reduce effectively the satisfiability
    problem for $\TLQSTR$-formulae to the same problem for normalized
    $\TLQSTR$-conjunctions (these will be defined precisely below);

    \item[(b)] secondly, we will prove that the collection of
    normalized $\TLQSTR$-conjunctions enjoys a small model property.
\end{itemize}
From (a) and (b), the solvability of the satisfiability problem for
$\TLQSTR$ will follow immediately.  
In fact, by further elaborating on
point (a), it could easily be shown that the whole collection of
$\TLQSTR$-formulae enjoys a small model property.


\subsection{Normalized $\TLQSTR$-conjunctions}\label{normal3LQS}
Let $\psi$ be a formula of $\TLQSTR$ and let $\psi_{\textit{DNF}}$ be a
disjunctive normal form of $\psi$.  We observe that the disjuncts of
$\psi_{\textit{DNF}}$ are conjunctions of $\TLQSTR$-literals, namely
quantifier-free atomic formulae of levels $0$ and $1$, or their
negations, and of purely universal formulae of levels $0$ and $1$, or
their negations, satisfying the linkedness condition
(\ref{condition}).  

By a suitable renaming of variables, we can assume that no bound
variable can occur in more than one quantifier in the same disjunct of
$\psi_{\textit{DNF}}$ and that no variable can have both bound and
free occurrences in the same disjunct.


Without disrupting satisfiability, we replace negative literals of the
form $\neg (\forall z_1) \ldots (\forall z_n) \varphi_0$ and $\neg
(\forall Z_1) \ldots (\forall Z_m) \varphi_1$ occurring in
$\psi_{\textit{DNF}}$ by their negated matrices $\neg \varphi_{0}$ and
$\neg \varphi_{1}$, respectively, since for any given
\TLQST-interpretation $\model = (D, M)$ one has $\model \models \neg
(\forall z_1) \ldots (\forall z_n) \varphi_0$ if and only if
$\model[z_1/u_1,\ldots , z_n/u_n] \models \neg \varphi_0$, for some
$u_1,\ldots ,u_n \in D$, and, likewise, $\model \models \neg (\forall
Z_1) \ldots (\forall Z_m) \varphi_1$ if and only if
$\model[Z_1/U_1,\ldots , Z_m/U_m] \models \neg \varphi_1$, for some
$U_1,\ldots , U_m \in \pow(D)$.  Then, if needed, we bring back the
resulting formula into disjunctive normal form, eliminate as above the
residual negative literals of the form $\neg (\forall z_1) \ldots
(\forall z_n) \varphi_0$ which might have been introduced by the
previous elimination of negative literals of the form $\neg (\forall
Z_1) \ldots (\forall Z_m) \varphi_1$ from $\psi_{\textit{DNF}}$, and
transform again the resulting formula in disjunctive normal form.  Let
$\psi'_{\textit{DNF}}$ be the formula so obtained.  Observe that all
the above steps preserve satisfiability, so that our initial formula
$\psi$ is satisfiable if so is $\psi'_{\textit{DNF}}$.  In addition,
the formula $\psi'_{\textit{DNF}}$ is satisfiable if and only if so is
at least one of its disjuncts.

It is an easy matter to check the each disjunct of
$\psi'_{\textit{DNF}}$ is a conjunction of $\TLQSTR$-literals of the following
types:
\[
\begin{array}[c]{cccc}
x=y\,, & \quad  x \in X\,,  & \quad \{x_1,\ldots,x_k\} = X\,,  & \quad \{x_1,\ldots,x_k\} \in A\,,\\
\neg (x = y)\,,& \quad \neg (x \in X)\,,  & \quad
\neg(\{x_1,\ldots,x_k\} = X)\,,  & \quad \neg(\{x_1,\ldots,x_k\} \in A)\,,\\
 X=Y\,, & \quad X \in A\,, & \quad\neg(X=Y)\,,  & \quad \neg(X \in A)\,, 
\end{array} \label{was1}\tag{I}
\]
where $x,y,x_1,\ldots,x_k \in \mathcal{V}_{0}$, $X,Y \in \mathcal{V}_{1}$, and $A 
\in \mathcal{V}_{2}$;
\[
(\forall z_1) \ldots (\forall z_n) \varphi_0\,, \label{was2}\tag{II}
\]
where $n > 0$ and $\varphi_0$ is a propositional combination of
quantifier-free level $0$ atoms; and
\[
(\forall Z_1) \ldots (\forall Z_m) \varphi_1\,, \label{was3}\tag{III}
\]
where $m > 0$ and $\varphi_1$ is a propositional combination of
quantifier-free atomic formulae of any level and of purely universal
formulae of level $0$, where the propositional components in
$\varphi_{1}$ of type $(\forall z_1) \ldots (\forall z_n) \varphi_0$
are linked to the bound variables $Z_1, \ldots , Z_m$.

%
%
%
%
%
%
We call such formulae \emph{normalized $\TLQSTR$-conjunctions}.


The above discussion can then be summarized in the following lemma.
\begin{lemma}
    The satisfiability problem for $\TLQSTR$-formulae can be 
    effectively reduced to the satisfiability problem for 
    $\TLQSTR$-conjunctions.
\end{lemma}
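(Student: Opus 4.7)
The plan is to formalize the transformation already sketched in the paragraph preceding the lemma. First I would start from an arbitrary $\TLQSTR$-formula $\psi$, compute a disjunctive normal form $\psi_{\textit{DNF}}$ by treating every quantifier-free atom and every purely universal formula as a single propositional atom (so that the DNF transformation does not descend into the matrix of a quantified subformula), and then rename bound variables so that each bound variable is unique to its quantifier block and disjoint from the set of free variables. All three operations preserve satisfiability, and the renaming is what makes the subsequent elimination of negated quantifier blocks sound.

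Next, I would iteratively eliminate the negative purely universal literals using the equivalences $\model \models \neg (\forall z_1)\ldots(\forall z_n)\varphi_0$ iff $\model[z_1/u_1,\ldots,z_n/u_n] \models \neg\varphi_0$ for some $u_1,\ldots,u_n \in D$, and analogously for level $1$. Thanks to the renaming, treating the released variables as new free variables of the disjunct is legitimate, so each top-level literal $\neg (\forall \vec{z})\varphi_0$ or $\neg (\forall \vec{Z})\varphi_1$ can be replaced by its negated matrix without altering satisfiability. After each batch of eliminations I would re-DNF the result; stripping a level $1$ negation may expose fresh level $0$ negated universals inside $\neg \varphi_1$, so a second round is needed, after which no negative universal literal can survive. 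Termination is immediate because the maximum quantifier depth strictly decreases at each round.

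The main verification, and the step I expect to be most delicate, is to check that every disjunct of the resulting $\psi'_{\textit{DNF}}$ has literals only of the shapes (\ref{was1})--(\ref{was3}) and that the linkedness condition (\ref{condition}) is preserved for every surviving level $0$ universal subformula occurring inside a level $1$ prefix. The shape part follows from a straightforward inspection of the grammar once all negated quantifier prefixes have been removed, since only positive purely universal formulae and quantifier-free literals remain. For linkedness, the key observation is that each of DNF-distribution, variable renaming, and matrix-replacement only reshuffles, duplicates, or deletes entire propositional components; none of these operations rewrites the matrix $\varphi_0$ of a surviving purely universal subformula or changes the identity of the variables $Z_1,\ldots,Z_m$ to which it is linked. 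Hence the validity of (\ref{condition}) for each such $(\forall \vec{z})\varphi_0$ is inherited verbatim from $\psi$. Combined with the obvious fact that $\psi'_{\textit{DNF}}$ is satisfiable iff at least one of its disjuncts is, this yields the effective reduction required.
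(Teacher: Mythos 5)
Your proposal is correct and follows essentially the same route as the paper: DNF conversion treating quantified subformulae as propositional atoms, renaming of bound variables, replacement of negated purely universal literals by their negated matrices justified by the semantic equivalence, a second DNF/elimination round for the level~$0$ negated universals exposed inside $\neg\varphi_1$, and a final inspection of the literal shapes. Your explicit check that the linkedness condition~(\ref{condition}) is inherited verbatim is a welcome addition to a point the paper leaves implicit, but it does not change the argument.
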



\subsection{A small model property for normalized $\TLQSTR$-conjunctions}
\label{decisionproc}

Let $\psi$ be a normalized $\TLQSTR$-conjunction and assume
that $\model = (D, M)$ is a model for $\psi$.
We show how to construct, out of $\model$, a finite ``small''
$\TLQST$-interpretation $\modelStar = (D^*,\assignStar )$ which is
a model of $\psi$.  We proceed as follows.  First we outline a
procedure to build a nonempty finite universe $D^* \subseteq D$ whose
size depends solely on $\psi$ and can be computed {\em a priori}.
Then, following Definition \ref{relintrp}, we construct a relativized
$\TLQST$-interpretation $\modelStar = (D^*,\assignStar )$ with
respect to suitable collections $\mathcal{V}_{0}'$ and $\mathcal{V}_{1}'$ of variables, and to a positive number $l$, and
show that $\modelStar$ satisfies $\psi$.

\subsubsection{Construction of the universe $D^*$.}\label{ssseUniv}
Let $\mathcal{V}_0^{\psi}$, $\mathcal{V}_1^{\psi}$, and
$\mathcal{V}_2^{\psi}$ be the collections of the variables of sort
$0$, $1$, and $2$ occurring in $\psi$, respectivelyand, an let $l_{\psi}$ be smallest number such that $k \leq l_{\psi}$, for every finite enumeration $\{x_1,\ldots,x_k\}$ occurring in $\psi$.  We compute $D^*$
by means of the procedure below.

%

Let $\psi_1, \ldots , \psi_h$ be the conjuncts of $\psi$ of the form
(\ref{was3}).  To each such conjunct $\psi_i \equiv (\forall Z_{i1})
\ldots (\forall Z_{im_i}) \varphi_i$, we associate the collection
$\varphi_{i1}, \ldots , \varphi_{i\ell_i}$ of the propositional
components of its matrix $\varphi_i$ 
and call the variables $Z_{i1}, \ldots , Z_{im_i}$ the \emph{arguments
of} $\varphi_{i1}, \ldots , \varphi_{i\ell_i}$.  Then we put
\[
\Phi \defAs \{ \varphi_{ij} : 1 \leq i \leq h \hbox{ and } 1 \leq j \leq
\ell_i \}.
\]

By applying the procedure \emph{Distinguish} described in
\cite{CanFer1995} to the collection $\{MX : X \in
\mathcal{V}_1^{\psi}\}$, it is possible to construct a set $D_{0}$ 
such that
\begin{itemize}
    \item $MX \cap D_{0} \neq MY \cap D_{0}$, for all $X,Y \in
    \mathcal{V}_1^{\psi}$ such that $MX \neq MY$, and
    
    \item $|D_{0}| \leq |\mathcal{V}_1^{\psi}| -1$.
\end{itemize}

Next, we construct a set $D_1$ satisfying that $|J \cap D_1| \geq \min(l_{\psi} +1, |J|)$, for every $J \in \{MX : X \in \mathcal{V}_1^{\psi}\}$. Plainly, we
can assume that $|D_{1}| \leq (l_{\psi} +1)|\mathcal{V}_1^{\psi}|$.



Then, after initializing $D^*$ with the set $\{Mx : x \in
\mathcal{V}_0^{\psi} \} \cup (D_{0}\cup D_1)$, for each $\varphi \in \Phi$ of
the form $(\forall z_1) \ldots (\forall z_n) \varphi_0$ having
$Z_1,\ldots,Z_m$ as arguments and for each ordered $m$-tuple
$(X_{i_1},\ldots,X_{i_m})$ of variables in $\mathcal{V}_1^{\psi}$ such
that $\model \not\models
\varphi_{X_{i_1},\ldots,X_{i_m}}^{Z_1\;\,,\ldots,\;Z_m}$, we insert in
$D^*$ elements $u_1,\ldots,u_n \in D$ such that
$\model[z_1/u_1,\ldots,z_n/u_n] \not\models
({\varphi_0})_{X_{i_1},\ldots,X_{i_m}}^{Z_1\;,\ldots,\:Z_m}$.

%


%
From the previous construction it follows easily that
\begin{equation}\label{DStar}
|D^*| \leq |\mathcal{V}_0^{\psi}| + (l_{\psi}+2)|\mathcal{V}_1^{\psi}| -1 +
N \cdot |\mathcal{V}_1^{\psi}|^M \cdot |\Phi|\, ,
\end{equation}
where $M$ and $N$ are, respectauto.ively, the maximal number of quantifiers
in purely universal formulae of level $1$ occurring in $|\Phi|$ and the maximal number of
quantifiers in purely universal formulae of level $0$ occurring in purely universal formulae of level $1$ in $|\Phi|$.  Thus, in general, the domain of the small
model $D^*$ is exponential in the size of the input formula $\psi$.

\subsubsection{Correctness of the relativization.}
Let us put $\modelStar = \Rel(\model, D^*, d^*, \mathcal{V}_0^{\psi}, \mathcal{V}_1^{\psi}, l_{\psi})$. We have to
show that, if $\model\models \psi$, then $\modelStar \models\psi$.

\begin{theorem}\label{correctness}
Let $\model$ be a $\TLQST$-interpretation satisfying a normalized
$\TLQSTR$-conjunction $\psi$.  Further, let $\modelStar = \Rel(\model,
D^*,d^*, \mathcal{V}_0^{\psi}, \mathcal{V}_1^{\psi}, l_{\psi})$ be the $\TLQST$-interpretation defined
according to Definition \ref{relintrp}, where $D^*$ is constructed as
above, $\mathcal{V}_0^{\psi}$ and $\mathcal{V}_1^{\psi}$ are the collections of variables of levels $0$ and $1$ occurring in $\psi$, respectively, and $l_{\psi}$ is the smallest number such that $k \leq l_{\psi}$, for every finite enumeration $\{x_1,\ldots,x_k\}$ of level $0$ variables  occurring in $\psi$.  Then $\modelStar \models \psi$.
\end{theorem}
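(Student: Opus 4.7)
The plan is to observe that, since $\psi$ is a \emph{normalized} $\TLQSTR$-conjunction, it is a conjunction of literals of the three types \ref{was1}, \ref{was2}, \ref{was3} displayed in Section \ref{normal3LQS}. It will therefore suffice to show that each conjunct is preserved when passing from $\model$ to $\modelStar$, handling the three cases by a direct appeal to the machinery developed in Section \ref{machinery}: Corollary \ref{cor_basic} for literals of type \ref{was1}, Lemma \ref{quantifiedformOne} for conjuncts of type \ref{was2}, and Lemma \ref{quantifiedformTwo} for those of type \ref{was3}. The whole proof thus reduces to verifying, in each case, that the hypotheses of these results are met by the data $(D^{*}, d^{*}, \mathcal{V}_{0}^{\psi}, \mathcal{V}_{1}^{\psi}, l_{\psi})$ built in Section \ref{ssseUniv}.

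I would first record the following five facts, all immediate consequences of the construction of $D^{*}$: (a) $Mx \in D^{*}$ for every (free) sort-$0$ variable $x$ occurring in $\psi$, since $D^{*}$ was initialized to contain $\{Mx : x \in \mathcal{V}_{0}^{\psi}\}$; (b) $(MX \mathop{\Delta} MY) \cap D^{*} \neq \emptyset$ for all $X,Y \in \mathcal{V}_{1}^{\psi}$ with $MX \neq MY$, by inclusion of $D_{0}$; (c) for every $X \in \mathcal{V}_{1}^{\psi}$, $M^{*}X = MX$ if $|MX| \leq l_{\psi}$, and $|M^{*}X| > l_{\psi}$ otherwise, which follows from the defining property of $D_{1} \subseteq D^{*}$, since when $|MX| \leq l_{\psi}$ the whole of $MX$ already lies in $D_{1}$, while when $|MX| > l_{\psi}$ at least $l_{\psi}+1$ of its elements do; (d) each finite enumeration $\{x_{1},\ldots,x_{k}\}$ appearing in $\psi$ satisfies $k \leq l_{\psi}$, by the very definition of $l_{\psi}$; and (e) the renaming step performed during normalization guarantees that no bound variable of $\psi$ is also free in $\psi$, so the bound sort-$0$ and sort-$1$ variables of every quantifier occurring in $\psi$ lie outside $\mathcal{V}_{0}^{\psi}$ and $\mathcal{V}_{1}^{\psi}$, respectively.

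With these facts in hand, the conjuncts of type \ref{was1} are handled by Corollary \ref{cor_basic}, whose hypotheses follow directly from (a)--(d), noting that the clauses mentioning $\mathcal{V}_{1} \setminus \mathcal{V}_{1}^{\psi}$ are vacuous because every sort-$1$ variable appearing in $\psi$ lies in $\mathcal{V}_{1}^{\psi}$. Similarly, for a conjunct of type \ref{was2}, namely $(\forall z_{1})\ldots(\forall z_{n})\varphi_{0}$, items (a), (c), (d), (e), and the same vacuity remark supply the four hypotheses of Lemma \ref{quantifiedformOne}.

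The main obstacle, and the place where the explicit second phase of the construction of $D^{*}$ is essential, is the level-$1$ case \ref{was3}, handled by Lemma \ref{quantifiedformTwo}. Its hypotheses (i)--(vi) and (viii) follow from (a)--(e) together with the observation that $\mathcal{V}_{1}^{\psi}$ contains every sort-$1$ variable free in $\psi$. The delicate condition is (vii): for each level-$0$ universal subformula $(\forall z_{1})\ldots(\forall z_{n})\varphi_{0}$ of $\varphi_{1}$ and each tuple $(X_{i_{1}},\ldots,X_{i_{m}})$ from $\mathcal{V}_{1}^{\psi}$ whose substitution falsifies $(\forall \vec{z})\varphi_{0}$ in $\model$, one must exhibit witnesses $u_{1},\ldots,u_{n} \in D^{*}$ falsifying the substituted matrix in $\model[\vec{z}/\vec{u}\,]$. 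This is precisely the property enforced by the final loop of the construction in Section \ref{ssseUniv}, which processes every $\varphi \in \Phi$ and every such $m$-tuple, explicitly inserting the required witnesses into $D^{*}$. Assembling the three cases yields $\modelStar \models \psi$, completing the proof.
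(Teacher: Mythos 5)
Your proof is correct and follows essentially the same route as the paper's: decompose the normalized conjunction into its three types of conjuncts, apply Lemma \ref{wasLe_basic} (via Corollary \ref{cor_basic}), Lemma \ref{quantifiedformOne}, and Lemma \ref{quantifiedformTwo} respectively, and check that the construction of $D^{*}$ in Section \ref{ssseUniv} supplies all the needed hypotheses. Your explicit justifications of facts (c) and (vii) are in fact somewhat more detailed than what the paper records.
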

\begin{proof}
We have to prove that $\modelStar \models \psi'$, for every conjunct
$\psi'$ in $\psi$.  Each conjunct $\psi'$ is of one of the three types
(\ref{was1}), (\ref{was2}), and (\ref{was3}) introduced in Section
\ref{normal3LQS}.  By applying Lemmas \ref{wasLe_basic}, \ref{quantifiedformOne}, or
\ref{quantifiedformTwo} to every $\psi'$ in $\psi$ (according to the type
of $\psi'$) we obtain the thesis.

Notice that the hypotheses of Lemmas \ref{wasLe_basic}, \ref{quantifiedformOne}, and
\ref{quantifiedformTwo} are fulfilled by the construction of $D^*$
outlined above. Indeed,
\begin{enumerate}
   \item $Z_{1},\ldots,Z_{m} \notin \mathcal{V}_1^{\psi}$;
   
   \item
   $X \in \mathcal{V}_1^{\psi}$, for every variable $X \in
   \mathcal{V}_1$ occurring free in $\psi$;
   
   \item
   $Mx \in D^*$, for every $x \in \mathcal{V}_{0}$ occurring free in
   $\psi$;
   
   \item
   $M^* X = MX$, if $|MX| \leq l_{\psi}$ and $|M^*X|>l_{\psi}$ otherwise, for every $X \in  \mathcal{V}_1^{\psi}$;
      
   \item
   $(MX \mathop{\Delta} MY) \cap D^{*} \neq \emptyset$, for all $X,Y
   \in \mathcal{V}_1^{\psi}$ such that $MX \neq MY$;
   
   \item
   each occurrence of finite enumeration $\{x_1,\ldots,x_k\}$ in $\psi$, with $x_i \in \mathcal{V}_{0}$, for every $i \in \{1,\ldots,k\}$, is such that $k \leq l_{\psi}$; 
      
   \item
   for every purely universal formula $(\forall z_1) \ldots (\forall
   z_n) \varphi_0$ of level $0$ occurring in a purely universal formula of level $1$, and
   variables $X_{1}, \ldots,X_{m} \in \mathcal{V}_{1}^{\psi}$ such that
   $\model \not\models ((\forall z_1) \ldots (\forall z_n)
   \varphi_0)_{X_1,\ldots,X_m}^{Z_1,\ldots,Z_m}$, there are
   $u_1,\ldots,u_n \in D^*$ such that
   $\model[z_1/u_1,\ldots,z_n/u_n]\not\models
   (\varphi_0)_{X_1,\ldots,X_m}^{Z_1,\ldots,Z_m}$;
 \item
 for every purely universal formula $(\forall z_1) \ldots (\forall
   z_n) \varphi_0$ of level $0$ occurring in $\varphi_{1}$, $\{z_1,\ldots,z_n\} \in \mathcal{V}_{0} \setminus \mathcal{V}_{0}^{\psi}$.
\end{enumerate}
\end{proof}

From the above reduction and relativization steps, it is not hard to
derive the following result:
\begin{corollary}
    The fragment $\TLQSTR$ enjoys a small model property (and
    therefore its satisfiability problem is solvable). \qed
\end{corollary}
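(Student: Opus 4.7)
The plan is to chain together the reduction to normalized $\TLQSTR$-conjunctions with Theorem~\ref{correctness}, which guarantees a satisfying interpretation of bounded cardinality for every such conjunction. Once both ingredients are in place, decidability follows by a routine exhaustive search over interpretations of bounded size.

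First, I would take an arbitrary $\TLQSTR$-formula $\psi$, put it in disjunctive normal form, and apply the transformations described in Section~\ref{normal3LQS} to obtain an equisatisfiable disjunction $\psi'_{\textit{DNF}}$ whose disjuncts are normalized $\TLQSTR$-conjunctions. Since a disjunction is satisfiable iff at least one of its disjuncts is, this reduces the satisfiability problem for $\psi$ to finitely many instances of the satisfiability problem for normalized conjunctions.

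Next, for each such normalized conjunction $\psi'$, I would invoke Theorem~\ref{correctness}: whenever $\model=(D,M)\models\psi'$, the relativized interpretation $\modelStar = \Rel(\model, D^*, d^*, \mathcal{V}_0^{\psi'}, \mathcal{V}_1^{\psi'}, l_{\psi'})$ produced by the construction in Section~\ref{ssseUniv} also satisfies $\psi'$, and its universe $D^*$ has cardinality bounded by the quantity in~(\ref{DStar}). That quantity is a concrete function of the purely syntactic parameters $|\mathcal{V}_0^{\psi'}|$, $|\mathcal{V}_1^{\psi'}|$, $l_{\psi'}$, $|\Phi|$, $M$, $N$ of $\psi'$, so it can be computed a priori from $\psi'$ itself; this is precisely the small model property.

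Finally, to derive decidability of the satisfiability problem, I would observe that, once a computable upper bound on the size of a putative model is available, it suffices to enumerate all $\TLQST$-interpretations $\modelStar=(D^*,M^*)$ whose universe $D^*$ has size at most this bound and whose assignment $M^*$ ranges only over the (finitely many) variables actually occurring in $\psi'$: sort $0$ variables mapped to elements of $D^*$, sort $1$ variables to subsets of $D^*$, and sort $2$ variables to collections of subsets of $D^*$. Each candidate can be tested mechanically, since quantifier-free atoms are evaluated directly and the bounded universal quantifications of levels $0$ and $1$ reduce to iterating over $D^*$ and $\pow(D^*)$, both of which are finite. I do not foresee any substantial obstacle here: the genuinely technical content, namely the preservation of satisfiability under relativization in the presence of restricted universal quantifiers and finite enumerations, has already been packaged into Lemmas~\ref{quantifiedformOne} and~\ref{quantifiedformTwo}. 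What remains is essentially a bookkeeping argument, the only subtle point being to check that the size bound~(\ref{DStar}) depends effectively on $\psi'$, which is immediate from the fact that all of its ingredients can be read off directly from the syntax of $\psi'$.
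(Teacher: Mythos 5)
Your proposal is correct and follows essentially the same route as the paper: reduce to normalized $\TLQSTR$-conjunctions via the transformations of Section~\ref{normal3LQS}, apply Theorem~\ref{correctness} together with the effective bound~(\ref{DStar}) to get the small model property, and conclude decidability by exhaustively enumerating the finitely many interpretations over a domain of bounded size. This is exactly the two-step plan (a)--(b) announced at the start of Section~\ref{satisfiability}, so nothing further is needed.
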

Reasoning as in \cite{CanNic13}, it is possible to define a class of subtheories $(\TLQSTR)^h$ of $\TLQSTR$, whose formulae have quantifier prefixes of length bounded by the constant $h \geq 2$ and satisfy certain syntactic constraints, having an \textsf{NP}-complete satisfiability problem. Such subtheories are quite expressive, in fact several set-theoretic constructs treated in Section \ref{expressiveness} such as, for instance, some variants of the powerset operator can be represented in them. Moreover, it can be shown that the modal logic \textsf{S5} can be represented in $(\TLQSTR)^3$.

\section{Expressiveness of the language $\TLQSTR$}\label{expressiveness}
Several constructs of elementary set theory are easily expressible
within the language $\TLQSTR$.  In particular, it is possible to
express with $\TLQSTR$-formulae a restricted variant of the set former,
which in turn allows one to express other significant set operators
such as binary union, intersection, set difference, set
complementation, 
the powerset operator and some of its variants, etc.

More specifically, a set former of the form $X=\{z : \varphi(z)\}$ can
be expressed in $\TLQSTR$ by the formula
\begin{equation}
\label{eq_setformX} (\forall z)(z \in X \leftrightarrow \varphi(z))\,,
\end{equation}
(in which case it is called an \emph{admissible set former of level 
$0$ for} \TLQSTR) provided that after transforming it into prenex normal
form one obtains a formula satisfying the syntactic constraints of
$\TLQSTR$.  This, in particular, is always the case whenever
$\varphi(z)$ is a quantifier-free formula of $\TLQSTR$.

In \ref{tableSetFormers1} some examples of formulae
expressible by admissible set formers of level $0$ for \TLQSTR are reported, where
$\mathbf{0}$ and $\mathbf{1}$ stand respectively for the empty set and
for the domain of the discourse, and $\overline{\phantom{Y}}$ is the
complementation operator with respect to the domain of the discourse.
The formulae in the first column of \ref{tableSetFormers1}
are the allowed atoms in the fragment \TLS (Two-Level
Syllogistic) which has been proved decidable in \cite{FerOm1978}. 
Since $\{x_1,\ldots,x_k\} = X$ is a level $0$ quantifier-free atomic formula in \TLQSTR,  \TLS with finite enumerations turns out to be expressible by 
\TLQSTR-formulae.

\begin{table}[t]{\small
\begin{center}
    \begin{tabular}{c|c}
        & \emph{admissible set formers for \TLQSTR of level $0$} \\
    \hline \\[-.2cm]
    $X = \mathbf{0}$ & $X = \{z : z \neq z\}$\\
    $X = \mathbf{1}$ & $X = \{z : z = z\}$\\
    $X = \overline{Y}$ & $X = \{z : z \notin Y\}$\\
    $X = Y_{1} \cup Y_{2}$ & $X = \{z : z \in Y_{1} \vee z \in
    Y_{2}\}$ \\
    $X = Y_{1} \cap Y_{2}$ & $X = \{z : z \in Y_{1} \wedge z \in
    Y_{2}\}$ \\
    $X = Y_{1} \setminus Y_{2}$ & $X = \{z : z \in Y_{1} \wedge z \notin
    Y_{2}\}$ \\
%
%
    \hline
\end{tabular}
\end{center}
\caption{\label{tableSetFormers1}Some literals expressible by
admissible set formers of level $0$ for \TLQSR.}}
\end{table}

In addition to the formulae in \ref{tableSetFormers1} the 
following literals 
\begin{equation}
\label{furtherLiterals}
 Z \subseteq X\,, \quad |Z| \leq h\,, \quad |Z| < h + 1\,, \quad
|Z| \geq h + 1\,, \quad |Z| = h
\end{equation}
are also expressible by \TLQSTR-formulae of level $0$, where $|\cdot|$
denotes the cardinality operator and $h$ stands for a nonnegative
integer constant (cf.~\ref{tableSetFormers2}).  In fact, it
turns out that all literals (\ref{furtherLiterals}) can be expressed
by level $0$ purely universal \TLQSTR-formulae which are linked to the
variable $Z$, so that they can freely be used in the matrix
$\varphi(Z)$ of a level $1$ universal formula of the form $(\forall
Z)\varphi(Z)$.  Let us consider, for instance, the formula
\begin{equation}
    \label{ZKey} 
(\forall z_{1})\ldots(\forall z_{h+1})
\Bigg({\displaystyle\bigwedge_{1 \leq i \leq h+1} z_i \in Z
\rightarrow \bigvee_{1 \leq i < j \leq h+1} z_i = z_j }\Bigg)
\end{equation}
which expresses the literal $|Z| \leq h$. The linkedness condition 
for it relative to the variable $Z$ is
\[
\neg\Bigg(\bigwedge_{1 \leq i \leq h+1} z_i \in Z
\rightarrow \bigvee_{1 \leq i < j \leq h+1} z_i = z_j\Bigg)
\rightarrow \bigwedge_{1 \leq i \leq h+1} z_i \in Z\,,
\]
which is plainly a valid \TLQSTR-formula since it is an instance of the
propositional tautology $\neg (\mathbf{p} \rightarrow \mathbf{q})
\rightarrow \mathbf{p}$, showing that (\ref{ZKey}) is linked to the
variable $Z$.  Similarly, one can show that the remaining formulae in
(\ref{furtherLiterals}) can also be expressed by level $0$ purely
universal \TLQSTR-formulae which are linked to the variable $Z$.

\begin{table}[t]{\small
\begin{center}
    \begin{tabular}{c|c}
        & $\TLQSTR$-formulae \\
    \hline \\[-.2cm]
    $Z \subseteq X$ & $(\forall z) (z \in Z \rightarrow z \in X)$\\
    $|Z| \leq h$ & $(\forall z_{1})\ldots(\forall z_{h+1})
    \Bigg({\displaystyle\bigwedge_{1 \leq i \leq h+1} z_i \in Z
    \rightarrow \bigvee_{1 \leq i < j \leq h+1} z_i = z_j
    }\Bigg)$\\
    $|Z| < h + 1$ & $|Z| \leq h$\\
    $|Z| \geq h + 1$ & $\neg (|Z| < h + 1) $\\
    $|Z| \geq 0$ & $Z = Z$\\
    $|Z| = h$ & $|Z| \leq h \wedge |Z| \geq h $\\[.2cm]
    \hline
\end{tabular}
\end{center}
\caption{\label{tableSetFormers2}Further formulae expressible by
\TLQSTR-formulae of level $0$.}}
\end{table}

Similar remarks apply also to the set former of the form $A=\{Z :
\varphi(Z)\}$.  This can be expressed by the $\TLQSTR$-formula 
\begin{equation}
\label{eq_setformA}  (\forall Z)(Z \in A \leftrightarrow \varphi(Z))
\end{equation}
(in which case it is called an \emph{admissible set former of level
$1$ for} \TLQSTR) provided that
$\varphi(Z)$ does not contain any quantifier over variables
of sort $1$, and 
all quantified variables of sort $0$ in $\varphi(Z)$ are
linked to the variable $Z$ as specified in
condition~(\ref{condition}).

\begin{table}[t]{\small
\begin{center}
    \begin{tabular}{c|c}
        & \emph{admissible set formers of level $1$ for \TLQSTR} \\
    \hline \\[-.2cm]
    $A = \mathbf{0}$ & $X = \{Z : Z \neq Z\}$\\
    $A = \mathbf{1}$ & $X = \{Z : Z = Z\}$\\
    $A = \overline{B}$ & $A = \{Z : Z \notin B\}$\\
    $A = B_{1} \cup B_{2}$ & $A = \{Z : Z \in B_{1} \vee Z \in
    B_{2}\}$\\
    $A = B_{1} \cap B_{2}$ & $A = \{Z : Z \in B_{1} \wedge Z \in
    B_{2}\}$\\
    $A = B_{1} \setminus B_{2}$ & $A = \{Z : Z \in B_{1} \wedge Z \notin
    B_{2}\}$\\
    $A = \{X_{1},\ldots,X_{k}\}$ & $A = \{Z : Z = X_{1} \vee \ldots 
    \vee Z = X_{k}\}$\\
%
    $A = \pow(X)$ & $A = \{Z : Z \subseteq X\}$\\
    $A = \pow_{\leq h}(X)$ & $A = \{Z : Z \subseteq X \wedge |Z| \leq 
    h\}$\\
    $A = \pow_{= h}(X)$ & $A = \{Z : Z \subseteq X \wedge |Z| = 
    h\}$\\
    $A = \pow_{\geq h}(X)$ & $A = \{Z : Z \subseteq X \wedge |Z| \geq 
    h\}$\\
    $A = \pow_{< h+1}(X)$ & $A = \{Z : Z \subseteq X \wedge |Z| \leq 
    h\}$\\
    $\cdots$ & $\cdots$\\
    \hline
\end{tabular}
\end{center}
\caption{\label{tableSetFormers3}Some literals expressible by
admissible set formers of level $1$ for \TLQSTR.}}
\end{table}

Some examples of formulae expressible by admissible set formers of
level $1$ for \TLQSTR are reported in \ref{tableSetFormers3}.  In
this case the symbol $\mathbf{1}$ stands for the powerset of the
domain of the discourse.  The meaning of the overloaded symbol
$\mathbf{1}$ can always be correctly disambiguated from the context.
In view of the fact that, as already remarked, the literals
(\ref{furtherLiterals}) can be expressed by level $0$ purely universal
\TLQSTR-formulae which are linked to the variable $Z$, it follows that 
all set formers in  \ref{tableSetFormers3} are indeed 
admissible.

The propositional combination of the following literals 
\begin{equation}\label{3LSSP}
\begin{array}{llllll}
A = \mathbf{0}\,, &  A = \mathbf{1}\,, & A = \overline{B} \,,
& A = B_{1} \cup B_{2}\,,\\  
A = B_{1} \cap B_{2}\,, & A = B_{1} \setminus B_{2}\,, & A =
\{X_{1},\ldots,X_{k}\}\,, & A = \pow(X)
\end{array}
\end{equation}
present in the first column of \ref{tableSetFormers3} form the
proper fragment \TLSSP of the theory \TLSSPU (Three-Level Syllogistic
with Singleton, Powerset, and Unionset) whose decision problem has
been solved in \cite{CanCut93}.  We recall that in addition to the
formulae in (\ref{3LSSP}), \TLSSPU involves also unionset clauses of 
the form $X = \bigcup A$, with $X$ a variable of sort $1$ and $A$ a 
variable of sort $2$, which, however, are not expressible by 
\TLQSTR-formulae. 

Besides the ordinary powerset operator, \TLQSTR-formulae allow one also
to express the variants $\pow_{\leq h}(X)$, $\pow_{= h}(X)$, and
$\pow_{\geq h}(X)$ reported in \ref{tableSetFormers3}, which
denote respectively the collection of all the subsets of $X$ with at
most $h$ distinct elements, with exactly $h$ elements, and with at
least $h$ distinct elements.  It is interesting to observe that the
satisfiability problem for the propositional combination of literals
of the forms $x \in y$, $x = y \cup z$, $x = y \cap z$, $x =
y \setminus z$,
involving also one occurrence of literals of the form $y = \pow_{=
1}(x)$, has been proved to be decidable in \cite{CanFerMic87}, when
sets are interpreted in the standard von Neumann hierarchy (cf.\ 
\cite{jech2003set}).

Another interesting variant of the powerset operator is the $\pow^*$
operator introduced in \cite{Can91,CanOmoUrs02} in the solution to the
satisfiability problem for the extension of \MLS with the powerset and
singleton operators.  We recall that given sets $X_{1}, \ldots, 
X_{k}$, $\pow^*(X_1,\ldots,X_k)$ denotes
the collection of all subsets of $\bigcup_{i=1}^k X_i$ which have
nonempty intersection with each set $X_{i}$, for $i = 1,\ldots,k$. In 
symbols,
\[
\begin{array}{rcl}
    \pow^*(X_1,\ldots,X_k) &\defAs& \displaystyle\left\{Z : Z
    \subseteq \bigcup_{i=1}^k X_i \wedge \bigwedge_{i=1}^{k} Z \cap
    X_i \neq \emptyset \right\} \\
    & \!\!\!\!\!= & \displaystyle\left\{Z : Z
    \subseteq \bigcup_{i=1}^k X_i \wedge \bigwedge_{i=1}^{k} \neg (Z 
    \subseteq \overline X_i) \right\}\,.
\end{array}
\]
From the latter expression, it readily follows that the literal $A =
\pow^*(X_1,\ldots,X_k)$ can be expressed by a \TLQSTR-formula.

%
%


Given sets $X_{1},\ldots,X_{n}$, the unordered Cartesian product $X_1
\otimes \ldots \otimes X_n$ is the set
\[X_1 \otimes \ldots \otimes X_n \defAs \Big\{\{x_{1},\ldots,x_{n}\} :
x_{1} \in X_{1}, \ldots, x_{n} \in X_{n}\Big\}\,.\]
Then, the literal
\begin{equation}
    \label{uCp}
    A = X_1 \otimes \ldots \otimes X_n\,,
\end{equation}
where $A$ stands for a variable of level 2 and $X_{1},\ldots,X_{n}$
here stand for variables of level 1, can be expressed by the $\TLQSTR$-formula
\begin{equation}
    \label{not3LQSR}
    (\forall Z)\Bigg(Z \in A \longleftrightarrow (\exists x_{1})
    \ldots (\exists x_{n})\Bigg(\bigwedge_{i=1}^{n} x_{i} \in X_{i}
    \wedge \{x_{1},\ldots,x_{n}\} = Z\Bigg)\Bigg)\,.
\end{equation}
In what follows, we show that (\ref{not3LQSR})  can be expressed without making use of the finite enumeration operator. 
When the sets $X_{1},\ldots,X_{n}$ are pairwise disjoint or, on the
opposite side, when they all coincide, we can readily express the
literal (\ref{uCp}) by a \TLQSR-formula.
For instance, if the sets $X_{1},\ldots,X_{n}$ are pairwise
disjoint, then $Z \in X_1 \otimes \ldots \otimes X_n$ if and only
if
\begin{enumerate}
    \item\label{first}  $|Z| = n$, and
    \item\label{second}  there exist $x_{1} \in X_{1},\ldots,x_{n} \in X_{n}$ such
    that $x_{1} \in Z$, \ldots, $x_{n} \in Z$\,.
\end{enumerate}
The above conditions can be used to express the literal (\ref{uCp}) by
the following \TLQSR-formula

$
(\forall Z)\Bigg(Z \in A \longleftrightarrow \Bigg(|Z| = n \wedge (\exists
x_{1}) \ldots (\exists x_{n})\Bigg(\bigwedge_{i=1}^{n} (x_{i} \in X_{i}
\wedge x_{i} \in Z)\Bigg)\Bigg)\Bigg)\,,
$

as is easy to check, where

$
\begin{array}{rcl}
    |Z| = n & \equivAs & |Z| \leq n \wedge |Z| \geq n  \\
    |Z| \leq n & \equivAs & (\forall x_{1})\ldots(\forall
x_{n+1})\Bigg(\displaystyle\bigwedge_{i=1}^{n+1} x_{i} \in Z \rightarrow \bigvee_{1
\leq i < j \leq n+1} x_{i} = x_{j}\Bigg)  \\
    |Z| \geq n & \equivAs & \neg(|Z| \leq n-1)
\end{array}
$
(notice that $|Z| \leq n$ is linked to the variable $Z$).


When $X_{1} = \ldots = X_{n}$, then $Z \in X_1 \otimes \ldots \otimes
X_n$ if and only if
$
|Z| \leq n  \text{ and } Z \subseteq X_{1}.
$
Thus, in this particular case, the literal (\ref{uCp}) can be
expressed by the \TLQSR-formula

$
(\forall Z)\Big(Z \in A \longleftrightarrow \Big(|Z| \leq n \wedge
(\forall x)(x \in Z \rightarrow x \in X_{1}) \Big)\Big)\,.
$

However, if we make no assumption on the sets
$X_{1},\ldots,X_{n}$, in order to characterize the sets $Z$ belonging
to $X_1 \otimes \ldots \otimes X_n$ by a \TLQSR-formula, we have to
consider separately the cases in which $|Z| = n$, $|Z| = n-1$, etc.,
listing explicitly, for each of them, all the allowed membership
configurations of the members of $Z$.  For instance, if
$n=2$, we have $Z \in X_1 \otimes X_{2}$ if and only if
\begin{itemize}
    \item $|Z| = 2$ and there exist distinct $x_{1} \in
    X_{1}$ and $x_{2} \in X_{2}$ s. t.
    $x_{1}, x_{2} \in Z$; or

    \item $|Z| = 1$ and the intersection $X_{1} \cap X_{2} \cap Z$ is
    nonempty.
\end{itemize}
Thus the following \TLQSR-formula expresses the literal $A = X_1
\otimes X_2$:

$
\begin{array}{ll}
(\forall Z)\Big(Z \in A \longleftrightarrow \Big( \Big(\hspace{-4pt}&
|Z| = 2 \wedge (\exists x_{1})(\exists x_{2})\Big(x_{1} \neq x_{2}
\wedge \displaystyle\bigwedge_{i=1}^{2}(x_{i} \in X_{i} \wedge x_{i}
\in Z) \Big)\Big)\\
& {} \vee \Big(|Z| = 1  \wedge (\exists
x_{1})( x_{1} \in X_{1} \wedge
x_{1} \in X_{2} \wedge x_{1} \in Z)\Big)
\Big)\Big)\,.
\end{array}
$

Likewise, in the case $n=3$, we have $Z \in X_1
\otimes X_{2} \otimes X_3$ if and only if
\begin{itemize}
    \item $|Z| = 3$ and there exist pairwise distinct $x_{1} \in
    X_{1}$, $x_{2} \in X_{2}$, and $x_{3} \in X_{3}$ such that
    $x_{1}, x_{2} , x_{3} \in Z$; or

    \item $|Z| = 2$ and there exist distinct $x_{1}$ and $x_{2}$ such
    that  either
    \begin{itemize}
	\item $x_{1} \in X_{1} \cap X_{2}$ and $x_{2} \in X_{3}$, or
	
	\item $x_{1} \in X_{1} \cap X_{3}$ and $x_{2} \in X_{2}$, or
	
	\item $x_{1} \in X_{2} \cap X_{3}$ and $x_{2} \in X_{1}$,
    \end{itemize}
    and such that $x_{1}, x_{2} \in Z$; or
    \item  $|Z| = 1$ and the intersection $X_{1} \cap X_{2}
    \cap X_{3} \cap Z$ is nonempty.
\end{itemize}

%

\begin{lemma}
    \label{characterization}
Let $X_{1},\ldots,X_{n}$ be given sets.
Then $Z \in X_1 \otimes \ldots \otimes X_n$ if and only there
exists a partition $P$ of the set $\{1,\ldots,n\}$ and a
bijection $\sigma: Z \rightarrow P$ such that
\begin{equation}
    \label{condition1}
    \text{if $i \in \sigma(x)$, then $x \in X_{i}$, for $x \in Z$
    and $i \in \{1,\ldots,n\}$}.
 \end{equation}\qed
\end{lemma}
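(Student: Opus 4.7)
The plan is to unfold both sides of the biconditional directly from the definition of the unordered Cartesian product, noticing that the condition on the partition is essentially a bookkeeping device for recording which of the indices $1, \ldots, n$ are sent to the same element of $Z$ by a surjection $i \mapsto x_i$ with $x_i \in X_i$. Since the statement is a biconditional, I would argue the two implications separately, each in one short paragraph.

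For the forward direction, assume $Z \in X_1 \otimes \ldots \otimes X_n$, so that $Z = \{x_1, \ldots, x_n\}$ for some $x_1 \in X_1, \ldots, x_n \in X_n$. For each $z \in Z$, define
\[
\sigma(z) \defAs \{i \in \{1,\ldots,n\} : x_i = z\}.
\]
The family $P \defAs \{\sigma(z) : z \in Z\}$ is a partition of $\{1, \ldots, n\}$: every index $i$ lies in $\sigma(x_i)$, and no index lies in two distinct blocks because $x_i$ is a single element of $Z$; each block is nonempty because $z \in Z$ forces $z = x_i$ for some $i$. By construction $\sigma : Z \to P$ is a bijection, and if $i \in \sigma(z)$ then $z = x_i \in X_i$, which is exactly condition~(\ref{condition1}).

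For the reverse direction, assume a partition $P$ of $\{1, \ldots, n\}$ and a bijection $\sigma : Z \to P$ satisfying (\ref{condition1}). Since $P$ is a partition, for each $i \in \{1, \ldots, n\}$ there is a unique block containing $i$, and by bijectivity of $\sigma$ this block has the form $\sigma(z_i)$ for a unique $z_i \in Z$; set $x_i \defAs z_i$. Then $i \in \sigma(x_i)$, so by (\ref{condition1}) we have $x_i \in X_i$. Finally $\{x_1, \ldots, x_n\} = Z$: the inclusion $\{x_1, \ldots, x_n\} \subseteq Z$ is immediate since each $x_i \in Z$, and conversely every $z \in Z$ satisfies $\sigma(z) \neq \emptyset$, so picking any $i \in \sigma(z)$ yields $x_i = z$. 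Hence $Z = \{x_1, \ldots, x_n\} \in X_1 \otimes \ldots \otimes X_n$.

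There is really no serious obstacle here: the only delicate point is to check that $\sigma$ in the forward direction is truly bijective (which uses the fact that distinct elements of $Z$ pick up disjoint index sets because $Z$ is a set, not a multiset) and that the blocks are nonempty. Both are immediate from the definitions, so the lemma amounts to making explicit the correspondence between tuples $(x_1, \ldots, x_n) \in X_1 \times \ldots \times X_n$ (up to relabelling that preserves the fibres) and pairs $(P, \sigma)$ as in the statement.
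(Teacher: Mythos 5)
Your proof is correct and follows essentially the same route as the paper's: in the forward direction you define $\sigma(z)=\{i : x_i = z\}$ and take $P$ to be its image, and in the reverse direction you recover $x_i$ as the $\sigma$-preimage of the block containing $i$, exactly as in the paper. The only difference is that you spell out the bijectivity and nonemptiness checks that the paper leaves as ``an easy matter to check.''
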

\begin{proof}
Let $Z \in X_1 \otimes \ldots \otimes X_n$. Then $Z =
\{x_{1},\ldots,x_{n}\}$, for some $x_{1}\in X_{1},\ldots,x_{n}\in
X_{n}$. For $x \in Z$, let us put
\[
\sigma(x) \defAs \{i : x = x_{i}\}\,.
\]
Then it is an easy matter to check that $P \defAs \{\sigma(x) : x \in
X\}$ is a partition of $\{1,\ldots,n\}$ and $\sigma$ is a bijection
from $Z$ into $P$ which satisfies (\ref{condition1}).

Conversely, assume that $\sigma: Z \rightarrow P$ is a bijection
satisfying (\ref{condition1}), for a partition $P$ of $\{1,\ldots,n\}$
and a set $Z$, and put
\[
x_{i} \defAs \sigma^{-1}(P_{i})\,,
\]
where $P_{i}$ is the block of $P$ containing $i$. Then it plainly
follows that $x_{i} \in X_{i}$, for $i=1,\ldots,n$ and that
$Z=\{x_{1},\ldots,x_{n}\}$, proving that $Z \in X_1 \otimes \ldots
\otimes X_n$.
\end{proof}
%
%
Let $\mathfrak{P}_{n}$ be the collection of all partitions of the set
$\{1,\ldots,n\}$. For any partition $P \in \mathfrak{P}_{n}$, we will
assume that the blocks $b_{1}(P),\ldots,b_{|P|}(P)$ of $P$ are
ordered by a total order $\prec$ in such a way that

$
b_{i}(P) \prec b_{j}(P) \quad \text{ if and only if } \quad \min
b_{i}(P) < \min b_{j}(P)\,.
$

Then, based on Lemma~\ref{characterization}, the literal $A = X_{1}
\otimes \ldots \otimes X_{n}$ is expressed by the following
\TLQSR-formula
\begin{multline}
    \label{mostro}
(\forall Z)\Bigg(Z \in A \leftrightarrow \bigwedge_{P \in \mathfrak{P}_{n}}
\Bigg(|Z|=|P| \wedge (\exists z_{1})\ldots(\exists
z_{|P|})\Bigg(\bigwedge_{1\leq i < j \leq |P|} z_{i} \neq z_{j} \\
 \wedge
\bigwedge_{i=1}^{|P|} \Bigg(z_{i} \in Z \wedge \bigwedge_{j \in b_{i}(P)}
z_{i} \in X_{j}\Bigg)\!\!\Bigg)\!\!\Bigg)\!\!\Bigg)\,.
\end{multline}
Let $\ell_{n}$ be the length of the formula (\ref{mostro}).
Then the following bounds on $\ell_{n}$ hold:
\begin{equation}\label{bounds1}
\ell_{n} = \Omega(nB_{n})\,, \qquad \ell_{n} =
\mathcal{O}(n^{2}B_{n})\,,
\end{equation}
where $B_{n} = |\mathfrak{P}_{n}|$ is the $n$th Bell's number.  Using
the bounds on $B_{n}$ by Berend and Tassa (cf.\ \cite{BT10})

$
\left(\frac{n}{e\ln n}\right)^{\!\!n}< B_{n} < \left(\frac{0.792n}{\ln
(n+1)}\right)^{\!\!n},
$

the bounds (\ref{bounds1}) yield

$
\ell_{n} = \Omega\left(n\left(\frac{n}{e\ln n}\right)^{\!\!n}\right)\,, \qquad
\ell_{n} = \mathcal{O}\left(n^{2}\left(\frac{0.792n}{\ln
(n+1)}\right)^{\!\!n}\right).
$

\section{Conclusions and future work}\label{conclusions}
We have presented a three-sorted stratified set-theoretic fragment,
$\TLQSTR$, and given a decision procedure for its satisfiability
problem.  The fragment turns out to be quite expressive since it allows to represent
several set-theoretic construct such as variants of the powerset operator and 
the unordered Cartesian product. Thanks to the presence of the finite enumeration operator, 
$\TLQSTR$ allows to represent the unordered Cartesian product by means of a formula which is linear in the
size of the product. Another representation of the latter construct is possible without resorting to 
the finite enumeration operator, but is this case the formula turns out to be exponentially longer. 

Proceeding as in \cite{CanNic13} it is possible to single out a family $\{(\TLQSTR)^h\}_{h \geq 2}$ of
sublanguages of $\TLQSTR$, characterized by imposing further
constraints in the construction of the formulae,  such that each
language in the family has the satisfiability problem
\textsf{NP}-complete, and to show that the modal logic $\Sc$ can be
formalized in $(\TLQSTR)^{3}$.
%
%
We further intend to study the possibility of formulating non-classical logics
in the context of well-founded set theory constructing suitable extensions
of the $\TLQSTR$ fragment. 

%
%
%
%
%
%
%
%
We also plan to extend the language so as
it can express the set theoretical construct of general union, thus being able to
subsume the theory \TLSSPU. 
%
%

\bibliographystyle{plain}

\end{document}